\newcommand{\ignore}[1]{}
\newcommand{\cT}{\mathcal{T}}
\newcommand{\eps}{\varepsilon}
\newcommand{\poly}{\mathrm{poly}}
\newcommand{\Sec}[1]{\hyperref[sec:#1]{\S\ref*{sec:#1}}} %section
\newcommand{\Eqn}[1]{\hyperref[eq:#1]{(\ref*{eq:#1})}} %equation
\newcommand{\Fig}[1]{\hyperref[fig:#1]{Fig.\,\ref*{fig:#1}}} %figure
\newcommand{\Tab}[1]{\hyperref[tab:#1]{Tab.\,\ref*{tab:#1}}} %table
\newcommand{\Thm}[1]{\hyperref[thm:#1]{Theorem\,\ref*{thm:#1}}} %theorem
\newcommand{\Fact}[1]{\hyperref[fact:#1]{Fact\,\ref*{fact:#1}}} %fact
\newcommand{\Lem}[1]{\hyperref[lem:#1]{Lemma\,\ref*{lem:#1}}} %lemma
\newcommand{\Prop}[1]{\hyperref[prop:#1]{Prop.~\ref*{prop:#1}}} %property
\newcommand{\Cor}[1]{\hyperref[cor:#1]{Corollary~\ref*{cor:#1}}} %corollary
\newcommand{\Conj}[1]{\hyperref[conj:#1]{Conjecture~\ref*{conj:#1}}} %conjecture
\newcommand{\Def}[1]{\hyperref[def:#1]{Definition~\ref*{def:#1}}} %definition
\newcommand{\Alg}[1]{\hyperref[alg:#1]{Alg.~\ref*{alg:#1}}} %algorithm
\newcommand{\Clm}[1]{\hyperref[clm:#1]{Claim~\ref*{clm:#1}}} %claim
\newcommand{\Obs}[1]{\hyperref[obs:#1]{Observation~\ref*{obs:#1}}} %observation
\newcommand{\Rem}[1]{\hyperref[rem:#1]{Remark~\ref*{rem:#1}}} %remark
\newcommand{\Con}[1]{\hyperref[con:#1]{Construction~\ref*{con:#1}}} %construction
\newcommand{\Step}[1]{\hyperref[step:#1]{Step~\ref*{step:#1}}} %step
\newcommand{\Assumption}[1]{\hyperref[assm:#1]{Assumption\,\ref*{assm:#1}}} %assumption
\newcommand{\ProbabilityRender}[2]{%fancy probability command
  \@ifnextchar\bgroup%
  {\renderwithdist{#1}{#2}}
   {\singlervrender{#1}{#2}}
}
\newcommand{\singlervrender}[2]{%
   \ensuremath{\mathchoice
       {{#1}\left[ #2 \right]}
       {{#1}[ #2 ]}
       {{#1}[ #2 ]}
       {{#1}[ #2 ]}
   }
}
\newcommand{\renderwithdist}[3]{%
   \@ifnextchar\bgroup
   {\superfancyrender{#1}{#2}{#3}}
   {\ensuremath{\mathchoice
      {\underset{#2}{#1}\left[ #3 \right]}
      {{#1}_{#2}[ #3 ]}
      {{#1}_{#2}[ #3 ]}
      {{#1}_{#2}[ #3 ]}
     }
   }
}
\newcommand{\superfancyrender}[5]{
   \ensuremath{\mathchoice
      {\underset{#1}{{#1}}\left#4 #3 \right#5}
      {{#1}_{#2}#4 #3 #5}
      {{#1}_{#2}#4 #3 #5}
      {{#1}_{#2}#4 #3 #5}
   }
}
\newcommand{\rtr}{regularly triangle-rich}
\newcommand{\mainproc}{RTRExtractor}
\newcolumntype{H}{>{\setbox0=\hbox\bgroup}c<{\egroup}@{}}
\title{Covering a Graph with Dense Subgraph Families, via Triangle-Rich Sets}
\author{Sabyasachi Basu$^*$}
\email{sbasu3@ucsc.edu}
\affiliation{%
  \institution{University of California Santa Cruz}
  \city{Santa Cruz, CA, USA}
}
\author{Daniel Paul-Pena$^{*\dagger}$}
\email{dpaulpen@ucsc.edu}
\affiliation{%
  \institution{University of California Santa Cruz}
  \city{Santa Cruz, CA, USA}
  }
\author{Kun Qian}
\email{qianku@amazon.com}
\affiliation{%
  \institution{Amazon}
  \city{Palo Alto, CA, USA}
}
\author{C. Seshadhri}
\email{sesh@ucsc.edu}
\affiliation{%
  \institution{University of California Santa Cruz}
  \city{Santa Cruz, CA, USA}
  }
\affiliation{%
  \institution{Amazon}
  \city{Palo Alto, CA, USA}
}
\author{Edward W Huang}
\email{ewhuang@amazon.com}
\affiliation{%
  \institution{Amazon}
  \city{Palo Alto, CA, USA}
}
\author{Karthik Subbian}
\email{ksubbian@amazon.com}
\affiliation{%
  \institution{Amazon}
  \city{Palo Alto, CA, USA}
}
\date{}
\begin{abstract}
Graphs are a fundamental data structure used to represent relationships 
in domains as diverse as the social sciences, bioinformatics, cybersecurity, the Internet, and more.
One of the central observations in network science is that real-world graphs are globally sparse,
yet contains numerous ``pockets" of high edge density. A fundamental task in graph mining is to discover
these dense subgraphs. Most common formulations of the problem involve finding a single (or a few)
``optimally" dense subsets. But in most real applications, one does not care for the optimality.
Instead, we want to find a large collection of dense subsets that covers a significant fraction
of the input graph. 

We give a mathematical formulation of this problem, using a new definition of \emph{regularly triangle-rich (RTR) families}.
These families capture the notion of dense subgraphs that contain many triangles and have degrees
comparable to the subgraph size. We design a provable algorithm, \mainproc, that can
discover RTR families that approximately cover \emph{any RTR} set. The algorithm is efficient and is inspired by recent results that use
triangle counts for community testing and clustering. 

We show that \mainproc{} has excellent behavior on a large variety of real-world datasets. It is able
to process graphs with hundreds of millions of edges within minutes. Across many datasets, \mainproc{} achieves
high coverage using high edge density datasets. For example, the output covers a quarter
of the vertices with subgraphs of edge density more than (say) $0.5$, for datasets with 10M+ edges.
We show
an example of how the output of \mainproc{} correlates with meaningful sets of similar vertices in a citation network, demonstrating the utility of \mainproc{} for unsupervised graph discovery tasks.

\end{abstract}
\begin{document}
\captionsetup[subfigure]{labelformat=empty}
\settopmatter{printfolios=true}
\maketitle
\def\thefootnote{*}\footnotetext{These authors contributed equally to this work.}
\def\thefootnote{$\dagger$}\footnotetext{Work done while author was an intern at Amazon.}

\section{Introduction} \label{sec:intro}
\begin{figure*}
\centering
\begin{subfigure}
         {\includegraphics[width= 0.16\textwidth]{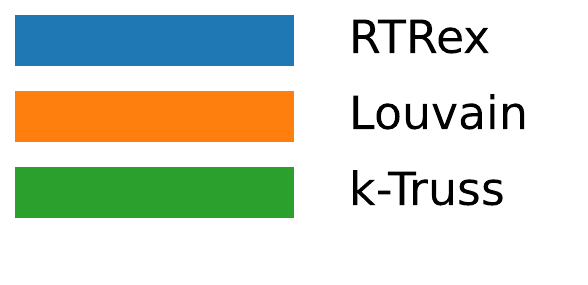}}
     \end{subfigure}
     \begin{subfigure}{}
         \begin{turn}{90}
            \hspace{1.8cm}\scriptsize{Coverage}             
         \end{turn}
     \end{subfigure}
        \begin{subfigure}
         {\includegraphics[width= 0.18\textwidth]{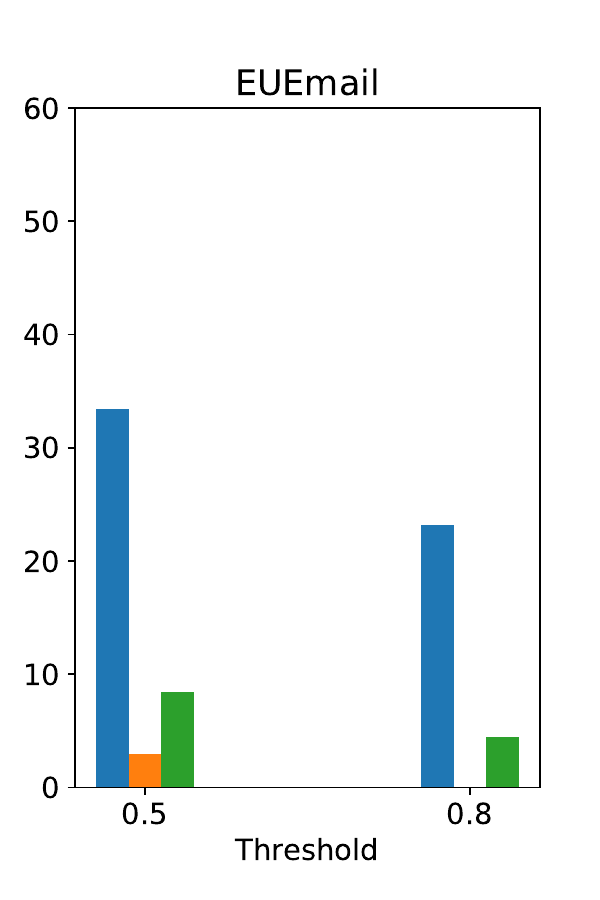}}
     \end{subfigure}
     \begin{subfigure}
         {\includegraphics[width= 0.18\textwidth]{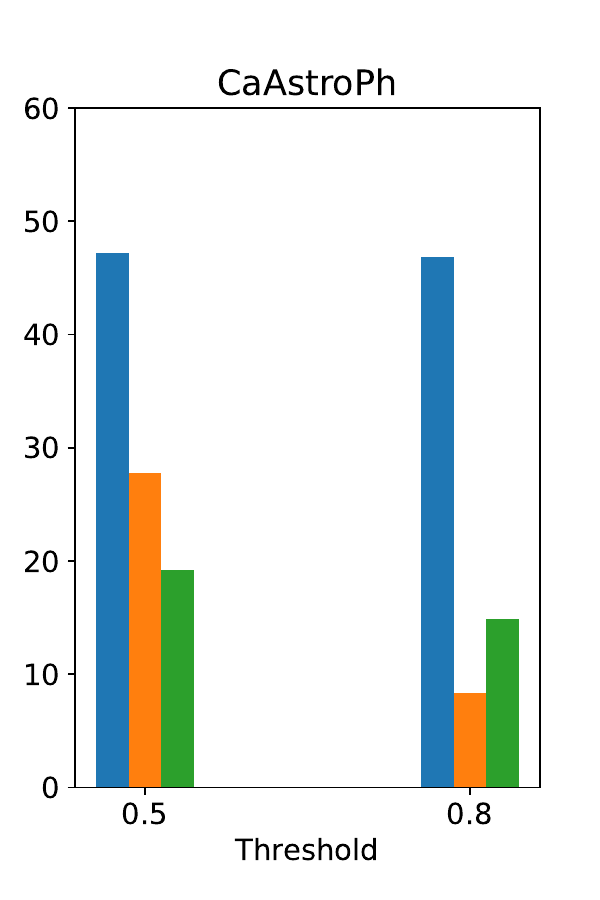}}
     \end{subfigure}
     \begin{subfigure}
         {\includegraphics[width= 0.18\textwidth]{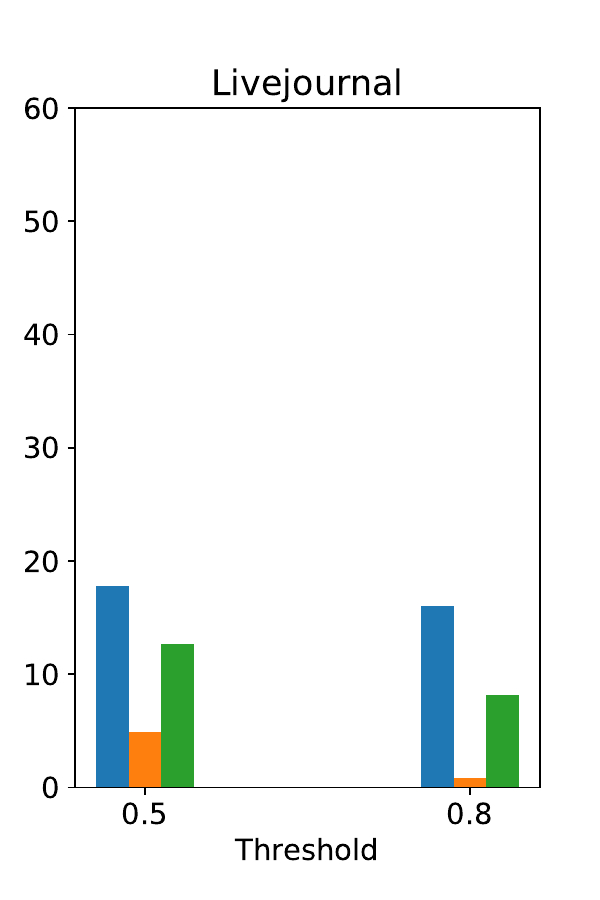}}
     \end{subfigure}
     \begin{subfigure}
         {\includegraphics[width= 0.18\textwidth]{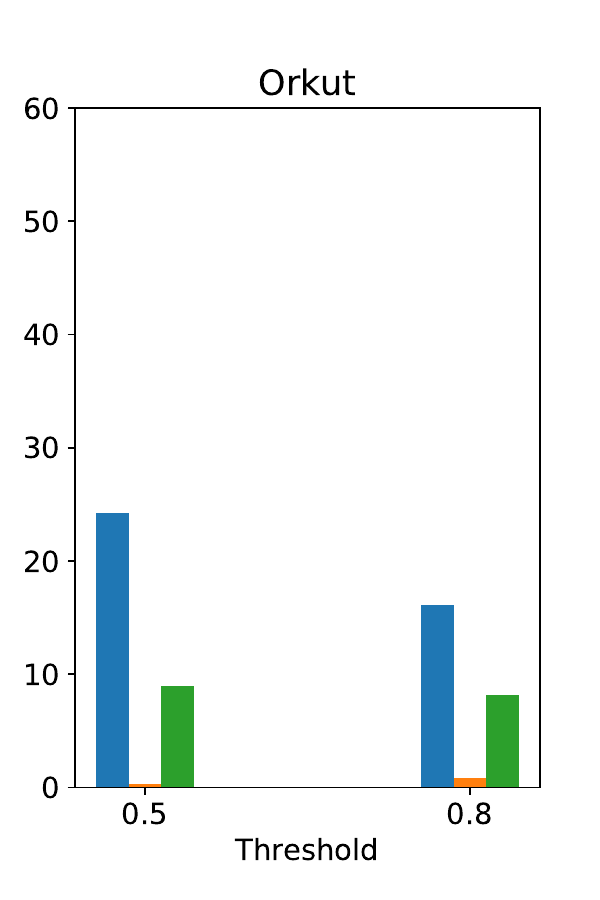}}
     \end{subfigure}
     
  \caption{Coverage of \mainproc{} compared to the two best performing competitors in a variety of networks of different sizes: from thousands to millions of vertices. For each method, we compute the fraction of vertices in sets of 5 vertices or more, and of density more than 0.5 and 0.8.
  \mainproc{} consistently covers significantly higher percentages in  dense clusters.}
  \vspace{-.5cm}
  \label{fig:intro}
\end{figure*}
Graphs are a fundamental data structure used to represent relationships 
in domains as diverse as the social sciences, bioinformatics, cybersecurity, the Internet, and more.
One of the central observations in network science is that real-world graphs are globally sparse,
yet contains numerous ``pockets" of density. A fundamental task in graph mining is to discover
these dense subgraphs~\cite{dense}.

Dense subgraph discovery has a rich set of applications. 
It has been used for finding cohesive social groups~\cite{Beal2003CohesionAP,forsyth2010group}, spam link farms in web graphs~\cite{KRRT99, DGP07, GKT05}, graph visualization~\cite{AhDBV05}, real-time story identification~\cite{ASKS12}, 
DNA motif detection in biological networks~\cite{FNBB06}, finding correlated genes~\cite{ZH05},
epilepsy prediction~\cite{Iasemidis03}, finding price value motifs in financial data~\cite{DJDLT09}, graph compression~\cite{BC08}, distance query indexing~\cite{JXRF09}, and 
increasing the throughput of social networking site servers~\cite{GJLSW13}. 
Dense subgraph discovery is a typical unsupervised machine learning task and a central
tool for knowledge/structure discovery in large networks.

\begin{figure}
    \centering
    \includegraphics[width = \linewidth]{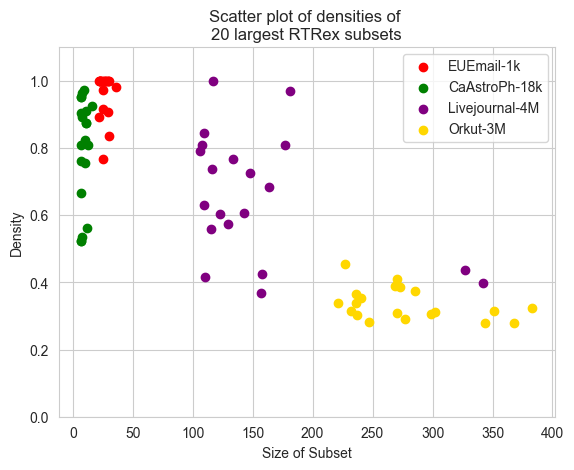}
    \caption{The largest 20 \mainproc{} subsets produced on some datasets. and their respective densities. We note that even for the largest subsets, which in some cases may have hundreds of vertices, density is still remarkably high.}
    \label{fig:largest-scatter}
    \vspace{-.5cm}
\end{figure}

There are numerous challenges in designing algorithms
in the context of the above applications. Consider the input graph $G = (V,E)$.
The \emph{density} of a subset $S \subset V$ is $E(S)/{|S| \choose 2}$,
where $E(S)$ is the number of edges inside $S$. The density has a maximum value
of $1$, when $S$ is a clique. The typical density of a real-world (say) social network
with millions of vertices is less than $10^{-5}$, but there are often sets of size (say)
$20$ with density above $0.5$. Our aim is to find sufficiently large sets that are sufficiently dense (in practice, tens to few hundreds of vertices and density $0.3$ or above).

Most of the algorithms and data mining literature phrase this as an optimization
problem~\cite{Gol84}, either finding the largest set above a given density, or finding the densest 
set above or below a given size~\cite{AndC09}. Another popular variant is related to the notion of correlation clustering~\cite{BBC04}, which has received a lot of theoretical attention in recent years. 
Typical formulations are NP-hard~\cite{Håstad1999, Fe02, Khot06}, and most practical
procedures use heuristics and approximation algorithms~\cite{Charikar, AndC09, Flowless, Ts15, Tsou14, KS22, DCS17, CQT22}. In the past decade, the problem of finding one or many dense subgraphs has been approached in various forms by the data mining community~\cite{BGMT21, Sot20, GKT05, SEF16}; we refer the reader to the recent survey by Lanciano et al. for a more comprehensive list~\cite{LMFB23}.
From the viewpoint of applications and as an unsupervised ML technique, this formulation
has shortcomings for real-world networks. The vanilla densest-subgraph variant is prone to extracting massive subgraphs, the work on $k$-way cuts asks for a specific, prescribed $k$ as the number of partitions, and $k$-densest subgraphs ask for a lower bound on the size of each partition (and the best known approximation factors are quite poor). One rarely cares about the exact (or even approximate) optimum. Since
data has noise, it is possible that a worse solution is still meaningful. Moreover,
applications require finding \emph{many} possible solutions. From a knowledge discovery
perspective, if a dense set of vertices indicates special structure, we would like
to find many such sets. Even if we do not capture the single optimal dense set exactly,
we would like to learn about as many dense sets as possible. Often each dense set represents some "information" or structure to be discovered, so we want a large disjoint collection of such sets. Dense subgraphs~\cite{SubGNN, FBBM22} are used as primitives for downstream ML-tasks, so the ability to cover more vertices in dense subgraphs would be an obvious benefit.

Our goal for this paper is the following. \emph{Can we design algorithms that cover a significant portion of the graph, using disjoint large, high density subgraphs?} From an empirical standpoint, we want
an algorithm that can scale to graphs with hundreds of millions of edges, and can find 
many dense subgraphs in real data.
Towards this goal, we would like a theoretical framework that leads to practical and provable algorithms. \emph{Can we give a formal setup, where \emph{many} dense subgraphs can be found tractably?}

\subsection{Our Contributions} \label{sec:contrib}

We give a new theoretical setup for dense subgraph discovery, accounting for triangle density. We design
an algorithm, \mainproc, that is a provable (approximation) algorithm for this version of dense subgraph
discovery. We show that an implementation of \mainproc{} is extremely successful at finding many dense subgraphs
in a variety of real-world datasets.
\begin{table*}[]
      \centering
      \begin{tabular}{|c||c|c|c|c|c|c|c|c|c|c|c|c|} 
      \hline 
      $|T|\geq$ & EUEmail &Hamsterster & CaAstroPh & Epinions & DBLP &Youtube & Skitter & Wiki & LJ & Orkut \\ \hline
10 &0.82 &0.97 &0.99 &0.87 &1.0 &0.89 &0.74 &0.76 &0.89 &0.85 \\ \hline
      \end{tabular}
      \caption{Mean edge density in \mainproc{} output sets of size at least 10. Amazon has no output sets with $10$ vertices.}
      \label{tab:mean-density}
      \vspace{-0.75cm}
  \end{table*}
  
\paragraph{Formulation through triangle-rich sets.} Inspired by many results that exploit triangles to find
dense subgraphs, we define the notion of ``regularly triangle-rich" (RTR)  sets. These are subsets of vertices
of comparable degree that contain many triangles. These sets are significantly more restrictive than high edge density sets. The utility of this definition is that we can prove strong theoretical guarantees on algorithms. Our actual empirical goal is not recovering RTR sets, but we discover that it is a good guide for getting practical dense subgraph discovery algorithms.

\paragraph{Theoretical algorithm and analysis.} We can prove a strong "recovery" guarantee, using
the RTR framework. We design an algorithm, \mainproc, that outputs a disjoint family of RTR sets. A constant fraction of \emph{every} RTR set is contained in the family.
Under some stronger conditions on the RTR set, a constant fraction of the set is actually contained in a single output of \mainproc.
We borrow tools from theoretical work on triangle-dense
decompositions of graphs to design \mainproc~\cite{GuRoSe14}. 

\paragraph{Fast implementation of \mainproc.} We give a practical implementation of \mainproc, with some heuristics to improve the coverage. (The code is available at~\cite{rtr-code}.) As stated earlier, our aim is to cover a large portion of the graph with dense subgraphs.
We experiment on a large collection of real-world, publicly available
datasets. \mainproc{} runs extremely fast on large graphs and outputs 
a large collection of dense subgraphs on real-world graphs.

\paragraph{High coverage in practice.} For example, on a large {\tt Orkut} social
network with more than a hundred million edges, \mainproc{} covers a quarter of the graph with
subgraph of density more than $0.5$. We consistently observe this behavior across many datasets we experiment with. 
We compare \mainproc{} against a number of state of the art dense subgraph discovery algorithms and community detection procedures
(the Louvain algorithm~\cite{Louvain}, iterated {\tt Flowless}~\cite{Flowless}, iterated greedy~\cite{Charikar}, and the nucleus decomposition~\cite{SaSePi+15}). Nucleus
decompositions are the only algorithm specifically tailored to finding large families of dense subgraphs.
(We discuss more in \Sec{related}.) In \Fig{intro}, we compare the number of vertices covered
by each method, using sets of more than a given density. 
Across almost all datasets, \mainproc{} gets higher coverage of dense subgraphs compared to all these algorithms.

\paragraph{Finding many large dense subgraphs.} In \Fig{largest-scatter}, we show the size and density of the 20 largest sets output by \mainproc{} on various datasets. In all datasets,
we see that \mainproc{} can get numerous, large, dense sets. On the Orkut social network with 3M vertices, we get more than 20 sets of size more than 200, with edge density more than 0.3.
In \Tab{mean-density}, we give the mean density of output sets of size at least 10 vertices. We see the extremely high densities; in the case of the DBLP dataset, almost every output set is a clique (density of 0.1).

\paragraph{Qualitative examination of subsets.} We also demonstrate the semantic significance of the output family. \mainproc{} is able to output sets of similar vertices in labelled networks without any knowledge of the labels. For
a DBLP citation network dataset~\cite{aminer}, \mainproc{} outputs a family of sets of papers. We manually inspect these
sets and see that they are always on a single subtopic. These results reinforce the need to have dense subgraph 
discovery algorithms that output \emph{many} sets. 

\section{Related Work} \label{sec:related}

Finding the densest subgraph is a problem that has attracted a lot of attraction from both theoretical and applied researchers. It is of theoretical interest because most formulations are NP hard, and a lot of research has gone into approximation algorithms; indeed, there exist linear time approximation algorithms due to Asahiro~\cite{ASAHIRO200215}, Charikar~\cite{Charikar} and Tsourakakis~\cite{Tsou14} that have received a lot of attention. Other efficient algorithms in practice include more recent work in data mining ~\cite{Ts15, Flowless} leveraging cliques, quasi-cliques, as well as approaches based on cores and trusses ~\cite{AndC09, WZTT10, WC12, Huang_2017, XMFB23}.
A growing group of 
results use triangle information for algorithmic purposes~\cite{SaSePi+15,Ts15,BeGlLe16,TPM17, VGW18}. We point the readers to a number of surveys on dense subgraph discovery ~\cite{dense,LMFB23,Fang2020}.

A distinct but related problem is the one of community detection. While definitions of communities vary, it is widely accepted that dense subgraphs are closely related to real world communities in networks~\cite{LLDM08, CS12}.  One of the most celebrated algorithms in this field is the Louvain algorithm~\cite{Louvain}, which does a local maximization of the modularity metric due to Newman and Girvan~\cite{GiNe02}. The recent work of Miyauchi and Kawase~\cite{MK15}, and the later work of Miyauchi and Kakimura~\cite{MK18} use metrics related to modularity to find dense subgraphs in networks. Other popular methods include Infomap~\cite{infomap} and other approaches based on the map equation~\cite{Rosvall_2009}, the Leiden algorithm, ~\cite{Leiden}, label propagation~\cite{LabelProp}. The survey by Jin et al.~\cite{Jin_2021} takes a detailed look at different community detection methods from statistical modelling to the more recent deep learning based methods.

Most relevant to our work is the result of Gupta, Roughgarden, and Seshadhri~\cite{GuRoSe14}. They prove a decomposition theorem
for triangle-rich graphs, as measured by graph transitivity. Their main result shows that a triangle-dense graph
can be clustered into dense clusters. In a recent result, a spectral connection to this was found in the work of Basu, Bera and Seshadhri~\cite{basu2022spectral}, which involved generalizations of their proof technique using the normalized
adjacency matrix. Our main insight is that the algorithm of Gupta et al can adapted to extract our stronger notion of RTR sets. Moreover, we can prove explicit approximation guarantees on the output. We also note that while we present a scalable implementation, ~\cite{GuRoSe14} only give a theoretical algorithm and no implementation.

\section{The main problem} \label{sec:prob}

Consider an input simple graph $G = (V,E)$. Our high level objective is to output many
sets that are ``dense". We need to define our density objective. 

The typical notion of density
of a subset $S \subseteq V$ of vertices is the edge density $E(S)/{|S| \choose 2}$,
where $E(S)$ the number of edges contained in $S$. Note that this definition is interesting
only when $|S|$ is large, and hence algorithms finding useful dense subgraphs also need
to optimize for size. Another less common notion of density is $E(S)/|S|$, the average
degree inside $S$. While it is easy to optimize for this notion, the output often has
poor edge density and consists of extremely large sets. As we discuss in \Sec{experiments},
practical algorithms that optimize for these quantities give poor results (even in terms
of edge density).

Our starting point is that the best possible dense subgraph is an isolated clique. For example,
a (say) clique of size 15 formed by vertices of degree (say) 1000 is not an ideal dense subgraph.
We want our definitions to account for the degrees of the vertices involved in the dense set. While edge density captures the desired objective in practice, it is not mathematically ideal. First, the isolated edge is an optimum.
Moreover, suppose $S$ was (internally) a dense bipartite graph. It could have edge density
as high as $0.5$, but does not really conform with our usual notion of dense substructure. 
We would like high internal clustering coefficients as well. The intuition is repeatedly
observed, where using triangles leads to better density/community/clustering outcomes in practice ~\cite{SaSePi+15,Ts15,BeGlLe16,TPM17}.

Obviously, an isolated clique
of vertices is the best possible dense subgraph. We want our dense subgraphs to share
the characteristics of an isolated clique, as much as possible. These considerations
motivate our definition of \emph{\rtr{} sets}.

First, we give the standard definition of triangle density.

\begin{definition} \label{def:tri-dense} The triangle density of a set $S$ of vertices
is the number of triangles contained in $S$ divided by ${|S| \choose 3}$.
\end{definition}

The next definition is the central notion of this paper.

\begin{definition} \label{def:rtr} Let $\alpha \in [0,1]$. A set $S$ of vertices is called 
\emph{$\alpha$-\rtr} if (i) all vertices in $S$ have (total) degree in $[\alpha |S|,|S|/\alpha]$, and (ii) the triangle density of $S$ is at least $\alpha$.
\end{definition}

We note that triangle density implies edge density due to T\'uran-type theorems, so an $\alpha$-\rtr{} set also has
edge density $\alpha$.
Observe that the degree in the above definition refers to degree in the original graph, not just in the set $S$.
Thus, the total degrees are comparable to set size itself.
A $1$-\rtr{} set $S$ is necessarily an isolated clique, the perfect dense subgraph. One can think of $\alpha$
as a measure of how close a set $S$ is to being an isolated clique, where we account for both internal triangle density
and the degree regularity. Our focus is on the setting when $\alpha$ is a constant, like (say) $0.5$ or $0.8$.
For the sake of mathematical analysis, we will use the (standard) terminology $\Omega(1)$-\rtr{} to denote ``constant"-\rtr{} sets.

The primary utility of this definition is shown by our algorithmic results. The results
give guarantees on the output of \mainproc, our main procedure given in \Alg{mainproc}.
\mainproc{} outputs a family $\cT$ of disjoint sets. \mainproc{} can ``weakly discover" all $\Omega(1)$-\rtr{} sets in a graph. 

\begin{theorem} \label{thm:main} Consider an input graph $G = (V,E)$. For any constant $\alpha$,
there exists input parameters for the algorithm \mainproc{} 
with the following guarantees. (i) \mainproc{} outputs a disjoint family of sets $\cT$, such that each set
is $\Omega(1)$-\rtr. (ii) For \emph{any} $S$ that is $\alpha$-\rtr, a least a constant fraction
of $S$ is contained in $\cT$. (All constants have polynomial dependencies on $\alpha$.)

(The running time of \mainproc{} is, up to constant factors, the time taken to list all triangles
of $G$.)
\end{theorem}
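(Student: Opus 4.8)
The plan is to realize \mainproc{} as an iterative ``find a triangle-rich witness, extract a cleaned dense core, delete it, repeat'' procedure, adapting the cluster-extraction of the triangle-dense decomposition of~\cite{GuRoSe14}, and to prove the coverage guarantee by a termination/charging argument. First I would fix every threshold as an explicit polynomial in $\alpha$. The algorithm maintains a current graph $G_t$ (with $G_0 = G$) and searches for a \emph{witness}: a small local structure --- e.g.\ a vertex $u$, or a triangle $\{x,y,z\}$ --- whose neighborhoods in $G_t$ are triangle-rich at level $\poly(\alpha)$ and whose incident vertices have not been ``degraded'' (they retain a $\poly(\alpha)$ fraction of their original degree). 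Given a witness it extracts a set $C$ from the relevant local neighborhood by first restricting to vertices whose degrees lie in a $\poly(1/\alpha)$-window around the witness's degree and then peeling away vertices of low internal degree; it appends $C$ to $\cT$ and deletes $C$ from $G_t$. When no witness remains it halts. The family $\cT$ is disjoint because every vertex is deleted at most once. For running time: a single triangle enumeration gives, for each vertex and edge, the list of triangles through it (hence all the quantities used to test witnesses); thereafter each extraction, and each update to a witness statistic caused by a deletion, can be charged to a triangle that it destroys, and each triangle is destroyed only once, so the total work is $O(\text{triangle-listing time})$.

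For part (ii), let $S$ be $\alpha$-\rtr; I would argue by contradiction that $\cT$ covers at least a $c = \poly(\alpha)$ fraction of $S$. Suppose fewer than $c|S|$ vertices of $S$ ever enter $\bigcup_{C\in\cT}C$. Since a vertex leaves $G_t$ only by entering some output set, at every step $t$ we have $|S \cap V(G_t)| > (1-c)|S|$, and two robustness facts then apply. \textbf{Preservation}: deleting fewer than $c|S|$ vertices from an $\alpha$-\rtr{} set leaves $S \cap V(G_t)$ an $\Omega(\alpha)$-\rtr{} set of the remaining graph --- each degree drops by less than $c|S| \le \tfrac12\alpha|S|$, the set shrinks by at most a constant factor, and the count of triangles inside it drops by at most $c|S|\binom{|S|}{2} \le \tfrac12\alpha\binom{|S|}{3}$ once $c$ is a small enough polynomial in $\alpha$. \textbf{Witness inside \rtr{} sets}: every $\Omega(\alpha)$-\rtr{} set $S'$ of $G_t$ contains a witness; this follows by averaging, e.g.\ the codegree averaged over edges of $S'$ is $\Omega(\alpha)|S'|$, so (discarding the few edges of low codegree) some triangle of $S'$ has all three of its edges of codegree $\Omega(\alpha)|S'|$, and since every vertex of $S'$ has total degree $O(|S'|/\alpha)$ this is triangle-richness at level $\poly(\alpha)$, with the ``undegraded'' condition automatic for vertices of $S'$. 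Combining the two, $G_t$ always contains a witness, so the loop cannot halt while fewer than $c|S|$ vertices of $S$ have been removed; since the loop does halt (each step removes a nonempty set of vertices), at least $c|S|$ vertices of $S$ are removed and hence placed in $\cT$ --- contradicting the supposition.

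For part (i), I must show each extracted $C$ is $\Omega(1)$-\rtr. Degree regularity is the easy half: by construction $C$ lies in a $\poly(1/\alpha)$-window of degrees, and the peeling forces $|C|$ to be comparable (up to $\poly(\alpha)$) to the witness's degree --- so, using that the witness's vertices are undegraded, every $w \in C$ has $\deg_G(w) \in [\poly(\alpha)\,|C|,\ |C|/\poly(\alpha)]$ \emph{in the original graph}, as the definition demands. The triangle-density half is where the real work lies: edge density alone is not enough (a complete bipartite set is edge-dense but has no triangles and is certainly not \rtr), so I would lean on the fact that $C$ was grown from a \emph{triangle}-rich witness and that the degree-restriction and peeling steps remove only a $\poly(\alpha)$ fraction of the triangles certified by the witness, so that $\poly(\alpha)\cdot\binom{|C|}{3}$ triangles survive inside $C$; combined with a T\'uran-type bound this certifies that $C$ is $\Omega(1)$-\rtr.

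The step I expect to be the main obstacle is exactly this two-sided calibration linking (i) and (ii): the witness predicate must be strong enough that whatever local structure it accepts yields, after degree-restriction and peeling, a set of \emph{constant} triangle density in the original graph (not just constant edge density), yet weak enough that every $\Omega(\alpha)$-\rtr{} remnant of an $\alpha$-\rtr{} set still satisfies it even after a $\poly(\alpha)$ fraction of the graph has been peeled away. Choosing mutually consistent polynomial thresholds for witness-richness, the degree window, and the peeling level --- and verifying that the triangle count, not merely the edge count, is preserved by peeling --- is the heart of the argument; a secondary technical point is the running-time bookkeeping, namely updating each witness statistic incrementally as clusters are deleted while charging every update to a distinct destroyed triangle.
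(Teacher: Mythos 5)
There is a genuine gap, and it is the one you yourself flag as the ``main obstacle'': part (i) is never actually established. Your witness-based extraction (degree window around the witness plus peeling of low internal degree) does not certify constant \emph{triangle} density of the extracted set $C$. A witness triangle whose edges each have codegree $\Omega(\alpha)|S'|$ only directly certifies on the order of $\poly(\alpha)\,d^2$ triangles (those through the witness's own edges), whereas $\Omega(1)$-\rtr{} needs $\Omega(|C|^3)$ triangles inside $C$; peeling by internal degree can leave bipartite-like junk, exactly the failure mode you point out, and nothing in your sketch rules it out. The paper resolves precisely this calibration by a \emph{global} cleaning step: it repeatedly deletes every edge lying in fewer than $\eps(d_u+d_v)$ triangles, so that in the surviving graph $H$ \emph{every} edge is triangle-rich and adjacent degrees are within constant factors; it then seeds at the minimum-degree vertex $v$ (giving degree regularity for the one- and two-hop neighborhood), shows the neighborhood $N$ contains $\eps^2 d_v^2/2$ edges and hence $\eps^3 d_v^3/2$ triangles with two vertices in $N$, and uses a thresholded two-hop inclusion ($t_u \ge \beta d_v^2$ with $\beta=\eps^8/9$) analyzed via the inequality $t_u\le b_u^2$ and a partial-sum bound to conclude the output has $\Omega(d_v^3)$ triangles on only $O(d_v)$ vertices (\Lem{rtr}). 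Without some analogue of this cleaning-plus-threshold analysis, your part (i) is an unproven claim on which part (ii) also silently depends (your witness predicate must be weak enough for the remnant of $S$ to contain one, yet strong enough that extraction from it is \rtr).

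Your coverage argument is also structured differently from the paper's, and the difference matters. You argue ``a witness persists, so the loop cannot halt, so $\ge c|S|$ vertices of $S$ are extracted,'' which relies on the premise that a vertex leaves the current graph only by entering an output set. In the paper's algorithm that premise is false: the cleaning step can isolate vertices, and these are discarded as uncovered singletons. The paper therefore proves \Lem{single} by a global triangle-budget charge: each edge deleted by cleaning destroys at most $\eps(d_u+d_v)\le \alpha s/4$ $S$-triangles, each extracted vertex destroys at most $s^2/2$, all $\alpha\binom{s}{3}$ $S$-triangles are destroyed by termination, and so if fewer than $\alpha s/8$ vertices of $S$ were extracted the counts cannot add up. If you keep your own algorithm design you would instead have to prove your preservation and witness-existence lemmas rigorously (including that the witness's $G_t$-neighborhood, which may contain many vertices outside $S$, still passes the predicate); if you adopt the paper's algorithm, you should switch to the triangle-budget argument, since your termination argument does not survive the singleton deletions.
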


This is a strong guarantee, since every $\Omega(1)$-\rtr{} set $S$ is approximately captured
by the output $\cT$. The output itself is a disjoint family of $\Omega(1)$-\rtr{} sets. Note that
we do not (and cannot) guarantee that $S$ is approximately captured by a single (or few) sets of $\cT$.
This is because \rtr{} sets can be overlapping, and for a disjoint family $\cT$ of \rtr{} sets,
some other $\Omega(1)$-rtr{} set is split among many sets of $\cT$.

Here are some examples showing these situations. Consider a graph formed by as follows. We first take $k$ disjoint $k$-cliques.
Then, we take a single vertex from each of these cliques, and form a $k$-clique $C$ inside it. There are now $(k+1)$ $k$-cliques,
and each of them is $1/2$-\rtr. (The $1/2$ is because the degrees are either $k-1$ or $2(k-1)$.) A perfectly reasonable
$\cT$ is to output the set of $k$ disjoint $k$-cliques. But then $C$ would be split among different sets in $\cT$. 
\Thm{main} asserts that, regardless of the graph structure, every $\Omega(1)$-\rtr{} set has a constant factor of its vertices in $\cT$,
which is itself a family of $\Omega(1)$-\rtr{} sets.

It is natural to ask if there are stronger assumptions under which an $\Omega(1)$-\rtr{} set is roughly
contained in one (or $O(1)$) sets in $\cT$. 

\begin{definition} \label{def:wellsep} Let $\alpha > \beta$. An $\alpha$-\rtr{} $S$
is called \emph{$\beta$-well separated} if: for all edges $(u,v)$ containing at least one
vertex in $S$, the edge contains at most $\beta |S|$ triangles where the third vertex
is outside $S$.
\end{definition}

So edges contained in a well-separated \rtr{} set have few triangles ``leaving" $S$. This 
is similar to a triangle cut constraint, used in previous work ~\cite{TPM17, BeGlLe16}.
For any well separated \rtr{} set $S$, we can prove a stronger condition: there exists an output 
set $\cT$ (of \mainproc) that contains a constant fraction of $S$.

\begin{theorem} \label{thm:wellsep} Consider any $\alpha$-\rtr{} set $S$ that is $\beta$-well separated,
for constants $\alpha, \beta$, where $\beta$ is sufficiently small (compared to $\alpha$).
Then, there exist input parameters for the algorithm \mainproc{} such that the output $\cT$
has a set that contains an $\Omega(1)$-fraction of $S$.
\end{theorem}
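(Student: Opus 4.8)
The plan is to leverage \Thm{main}, which already guarantees that a constant fraction of any $\alpha$-\rtr{} set $S$ appears somewhere in the disjoint family $\cT$; the new work is to show that under the $\beta$-well-separation hypothesis, this constant fraction cannot be badly fragmented across many sets of $\cT$, so it must concentrate in a single output set. First I would open up the internals of \mainproc{} as inherited from the triangle-dense decomposition machinery of~\cite{GuRoSe14}: the algorithm grows clusters around ``seed'' edges by collecting vertices that share many triangles with the seed, and the quantitative condition for a vertex (or edge) to be pulled into a cluster is exactly a lower bound on its triangle participation with the cluster's current core. The point of $\beta$-well-separation is that for any edge $(u,v)$ touching $S$, at most $\beta|S|$ of its triangles use a third vertex outside $S$, so since $S$ is $\alpha$-\rtr{} the bulk of the triangles on such an edge (a $(1-\beta/\text{poly}(\alpha))$-ish fraction, after accounting for the degree bounds $|S|$-to-$|S|/\alpha$) stay inside $S$. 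This is precisely the ``locally most triangles point inward'' property that prevents the decomposition from slicing $S$ apart.

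The key steps, in order, would be: (1) Recall from the analysis behind \Thm{main} that \mainproc{} processes edges/vertices and, whenever it forms a cluster, the cluster inherits a constant fraction of the triangle mass it touches; in particular some vertex $v \in S$ with high within-$S$ triangle degree will trigger the creation of (or be absorbed into) some cluster $T \in \cT$. (2) Fix the first such cluster $T$ that captures an $\Omega(1)$-fraction of $S$'s vertices — its existence follows from \Thm{main}(ii) together with a pigeonhole/charging argument, which I address below. (3) Argue that once $T$ contains, say, a $\gamma$-fraction of $S$ for a suitable constant $\gamma = \text{poly}(\alpha)$, every remaining vertex $w \in S \setminus T$ has, by the triangle density of $S$ and the well-separation bound, at least $\text{poly}(\alpha)\cdot|S|$ triangles with vertices of $S \cap T$ — because $w$'s triangles are mostly inside $S$ (well-separation) and $S \cap T$ is a constant fraction of $S$ so it carries a constant fraction of those triangles. (4) Check that this triangle count clears \mainproc's absorption threshold (the threshold is $\Omega(|T|)$ or $\Omega(|S|)$ up to the tunable parameters, and $|T| = \Theta(|S|)$ here by the degree regularity of $S$), so $w$ is pulled into $T$ as well — hence essentially all of $S$, or at least an $\Omega(1)$-fraction, ends up in the single set $T$. (5) Finally, verify this is consistent with disjointness: $T$ may contain vertices outside $S$, but that is fine; what matters is $|T \cap S| = \Omega(|S|)$, and the already-established guarantee that $T$ is itself $\Omega(1)$-\rtr.

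The main obstacle I anticipate is step (2)/(3): \emph{showing that the well-separated fraction of $S$ is not split across several clusters before any single one gets large}. The charging argument from \Thm{main} only promises that $\Omega(1)$ of $S$ lands in $\cT$, potentially spread over many small pieces; I need to rule out the scenario where \mainproc{} carves $S$ into many clusters each holding a tiny slice. The fix is to exploit the \emph{ordering} in which \mainproc{} extracts clusters together with the well-separation bound: the first cluster whose seed edge lies in (or near) $S$ must, at the moment it is extracted, already see the full inward triangle mass of that edge — which is $\Omega(\text{poly}(\alpha)|S|)$ triangles all landing on still-unclustered vertices of $S$ — and the decomposition's ``grow while a constant fraction of triangles is gained'' loop will then keep pulling in $S$-vertices until the cluster is $\Omega(|S|)$ in size; well-separation guarantees the loop does not get diverted to outside vertices, since those carry comparatively few triangles with the growing core. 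Making the threshold bookkeeping here tight — tracking how the constants degrade through the grow loop and ensuring $\beta$ small enough ($\beta \le c\cdot\alpha^{O(1)}$) keeps the inward mass dominant at every step — is where the real calculation lives, and it is the step I would expect to be the most delicate.
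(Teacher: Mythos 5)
There is a genuine gap, and it sits exactly where you flagged it. Your step (2) --- getting a single cluster that already holds an $\Omega(1)$-fraction of $S$ from \Thm{main} ``by pigeonhole'' --- is the statement to be proved, and your proposed fix does not match the algorithm. \mainproc{} has no iterative ``grow while a constant fraction of triangles is gained'' loop: each extraction is a one-shot, radius-two operation. It takes the lowest-degree seed $v$, puts the \emph{entire} neighborhood $N$ into $T$, and then in a single pass adds those two-hop vertices $u$ with $t_u > \beta d_v^2$ triangles into $N$ --- triangles into the seed's one-hop neighborhood $N$, not into the current cluster $T$ or into $S\cap T$. So your steps (3)--(4), in which a partially built $T$ keeps absorbing the remaining vertices of $S$ because they have many triangles with $S\cap T$, describe a dynamic the procedure never performs, and the absorption threshold is $\beta d_v^2$, not ``$\Omega(|T|)$ or $\Omega(|S|)$.''

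The paper's proof instead examines the \emph{first} extraction that touches $S$ at all (its existence comes from \Lem{single}) and splits into three cases: the seed $v$ lies in $S$; the seed has a neighbor in $S$; or a vertex of $S$ is grabbed by \Step{twohop} while $\{v\}\cup N$ is disjoint from $S$. In the first two cases no growth argument is needed: the relevant edge at $v$ survived \Step{clean}, so it carries at least $\eps(d_u+d_v)$ triangles in $H$, well-separation caps the triangles leaving $S$ on that edge by $\delta|S|$ (writing $\delta$ for the well-separation parameter), and hence the one-hop neighborhood $N$ alone already contains $\Omega(\eps d_v)=\Omega(\poly(\alpha)\,|S|)$ vertices of $S$, all of which go into this single output set. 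The third case is shown to be \emph{impossible}: a vertex $u\in S$ can form at most $\delta d_v|S|\le(\delta/(\alpha\eps^2))\,d_v^2<\beta d_v^2$ triangles with two vertices of $N\subseteq V\setminus S$, so it never clears the \Step{twohop} threshold. This Case-3 impossibility is the piece missing from your argument: it is precisely what rules out $S$ bleeding vertices one at a time into foreign clusters before an extraction seeded in or adjacent to $S$ occurs, and it uses well-separation in the opposite direction from your step (4) --- to \emph{block} absorption of $S$-vertices into outside clusters, not to \emph{enable} absorption of leftover $S$-vertices into a cluster already containing part of $S$.
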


\paragraph{Connection to coverage.} As said earlier, in practice, we care for the coverage. \Thm{main} and \Thm{wellsep} do not explicitly give coverage guarantees, but there are implicitly implied. Suppose a large portion of the graph $G$ can be covered by disjoint $\Omega(1)$-\rtr{} sets. A constant fraction of \emph{each} of these sets is present in the output of \mainproc, so the output will have good coverage. (And each output set is \rtr, so the graph is covered with dense sets.) The "real" validation is done through our experiments in \Sec{experiments}.

\section{The Main Ideas Behind \mainproc{}}\label{sec:ideas}

We describe the main procedure, \mainproc. 
The inspiration for \mainproc{} is the theoretical decomposition results of Gupta, Roughgarden,
and Seshadhri~\cite{GuRoSe14}. (We stress that~\cite{GuRoSe14} is purely theoretical and has no
empirical results.) The main idea, which is itself present in numerous results, 
is to use the triangles to guide a local extraction of a dense ``cluster"~\cite{SaSePi+15,Ts15,BeGlLe16,TPM17}.

We use $d_v$ to denote
the degree of $v$ in the input graph $G$.

\begin{algorithm}[ht]
	\caption{\mainproc$(G,\eps, \beta)$: (We use $d_v$ to denote the degree of $v$ in the input graph $G$.)}
	\label{alg:mainproc}
	\begin{algorithmic}[1]
        \State Initialize an empty family $\cT$.
		\State Initialize subgraph $H$ to $G$.
		\While{$H$ is non-empty} 
            \While{there is edge $e = (u,v)$ in $< \eps(d_u + d_v)$ triangles in $H$}
                \State Delete $e$ from $H$. \label{step:clean}
            \EndWhile
            \State Delete all isolated vertices from $H$.
            \State Pick a vertex $v$ in $H$ with lowest $d_v$.
            \State Construct $N$, the neighborhood of $v$ in $H$.
            \State Let set $T = \{v\} \cup N$.
            \For{every vertex $u$ that is a neighbor of $N$}
                \State Let $t_u$ be the number of triangles (in $H$) from $u$ to $N$.
                \State If $t_u > \beta d^2_v$ , add $u$ to $T$. \label{step:twohop}
            \EndFor
            \State Delete $T$ from the subgraph $H$.
			\State Add set $T$ to output family $\cT$.
        \EndWhile
		\State Output $\cT$.
	\end{algorithmic}
\end{algorithm}

The procedure \mainproc{} takes two arguments $\eps, \beta$ that are thresholds used at
two separate steps. In the theorems, these (constant) parameters are chosen appropriate for the 
analysis. In practice, we set these to simple fixed values. Also, the practical implementation
differs somewhat from the theoretical description, but that is mostly for convenience.
We discuss these practical aspects in \Sec{prac}. In this section, our focus is on the theoretical aspects
and intuition for \mainproc.

\mainproc{} goes through an iterated extraction process. First, there is a ``cleaning" step (\Step{clean})
that removes edges in few triangles, yielding subgraph $H$. The main work is finding an \rtr{} set $T$
in $H$. This set is removed from $H$. The resulting $H$ is again cleaned, the next \rtr{} set is found,
and so on. We refer to the process of each output set $T$ being removed as an \emph{extraction}.

The cleaning operation of \Step{clean} removes edges that are present in too few triangles. 
Intuitively, these are edges that may ``distract" the subsequent extraction, so we delete them.
The parameter $\eps$ is used
to determine this threshold. Observe that $d_u + d_v$ is an upper bound on the number
of triangles that the edge $(u,v)$ participates in. \Step{clean} computes the fraction of triangles
that $(u,v)$ participates in with respect to this simple upper bound. That fraction is used as a threshold.
Note that the total number of triangles that $(u,v)$ participates in is at most $\min(d_u,d_v)$,
so this step automatically removes edges between vertices of disparate degrees.

The cleaning of \Step{clean}, while quite simple, is an immensely useful algorithmic (and practical)
idea. In a cleaned graph, the neighborhood of a vertex is automatically rich in triangles. So a simple
BFS based algorithm can extract out dense (and triangle-rich) sets. While this idea has appeared
in theory and some practical results ~\cite{SaPa11}, it has been underutilized as a technique
for dense subgraph discovery.

After cleaning, \mainproc{} takes the lowest degree vertex $v$ in the current subgraph $H$,
and start a ``seeded" extraction from $v$. At this point, all edges participate in sufficiently many triangles.
So the neighborhood $N$ of $v$ also contains enough triangles, and we set $T$ as $\{v\} \cup N$.
If we simply extract $T$ as a set, we run the risk of destroying too many triangle-rich sets.
The main insight is that we need to extract sets of radius $2$.

We give a simple justification. Suppose $G$ is a regular complete tripartite graph, so the vertices
are partitioned into three equal sized sets $V_1, V_2, V_3$, with all edges between sets. If $v \in V_1$,
then $\{v\} \cup N$ essentially consists of $V_2 \cup V_3$. Removing this set would destroy all the edges
(and triangles) incident to $V_1 \setminus \{v\}$. All vertices in $V_1$ would become singletons, which
is a bad output. More generally, dense subgraphs (like dense Erd\H{o}s-R\'{e}nyi graphs) often have radius
$2$. If $G$ was just a single uniformly dense random graph, extracting neighborhoods would not suffice.

So \mainproc{} looks at the neighbors of $N$, the two-hop neighborhood from $v$. The key insight
is to find vertices in the two-hop neighborhood that form sufficiently many triangles with the neighborhood $N$.
The argument $\beta$ is used as a threshold to determine which vertices are added to $T$. This 
parameter is critical to get two-hope neighborhoods that are triangle-rich. Vertices not assigned to any output set $T$ are considered singletons. 

\subsection{Challenges in the analysis} \label{sec:challenge}

There are two parts to the analysis. The first part shows that the output of \mainproc{} is a family
of \rtr{} sets. The second part shows that a given \rtr{} set $S$ is approximately captured,
either in weak sense of \Thm{main} or the stronger sense of \Thm{wellsep}. There is a tension
between these goals. To output \rtr{} sets, the cleaning of \Step{clean} needs to be aggressive
enough to delete edges that reduce triangle density. Also, \Step{twohop} adds vertices
to a set that is already triangle-rich, and hence we want it to be stringent.
But these are exactly the opposite of what is needed to preserve an existing triangle-rich set $S$.
The clean operation could delete edges from $S$, making vertices in $S$ disconnected in $H$.
Moreover, we would need \Step{twohop} to add in many vertices, to preserve as much of $S$.
The main point of the analysis is to prove that this tension can be resolved, with
the right arguments $\eps$ and $\beta$.

One of the insights of this analysis is that an accounting in terms of triangles
is more powerful than keeping track of edges and density. 
As we described earlier, \Step{clean} is a central operation to ensure that the output
is triangle-rich (and hence dense). Iterated with extraction steps, it might destroy
an existing \rtr{} $S$. As various extractions proceeds, they may remove a few vertices of $S$.
These removals affect the internal structure in $S$ by deleting triangles. Hence, subsequent
cleaning operations could end up deleting large portions of $S$, affecting the desired
guarantees of \Thm{main} and \Thm{wellsep}. 

We prove that the \rtr{} definition ensures that there is a ``core" of $S$ that is not
affected by deletions of vertices outside the core. The only way that \mainproc{}
can delete the core is by extracting vertices into the output sets $T$. Moreover,
using the well-separated condition of \Def{wellsep}, we can show any extraction
that removes even a single vertex of the core must remove a constant fraction of $S$.
These arguments crucially use \Step{twohop} to ensure that extractions that (say)
seeded in $S$ must cover a significant fraction of $S$.

\section{Analysis}\label{sec:analysis}

The first step in the analysis is to show that every set output by \mainproc is $\Omega(1)$-\rtr{}.
For convenience, we think of the input arguments $\eps, \beta$ to constants, so we use $\Omega(1)$
notation to suppress dependencies on these values. We also consider the $\alpha$ parameter (in \Def{rtr})
to be a constant. All constants will have at most polynomial dependencies on each other. We do not
try to optimize these dependencies, so our mathematical analysis is focused on the asymptotics.
In practice, we set these parameters to be $0.1, 0.3, 0.5$, etc. So assuming they are constant is consistent
with our experiments.
(The true ``test" of \mainproc{} is the strong experimental results.)

\begin{lemma} \label{lem:rtr} Every $T$ output by \mainproc$(\eps,\beta)$ is $\Omega(\poly(\eps,\beta))$-\rtr.
Hence, if $\eps, \beta = \Omega(1)$, then every $T$ output is $\Omega(1)$-\rtr.
\end{lemma}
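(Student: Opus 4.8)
The plan is to fix an arbitrary output set $T$, let $v$ be the seed that produced it, $d := d_v$ its (original) degree, $N$ its neighborhood in the cleaned subgraph $H$ at the moment of extraction, and $n := |N|$; note $n \le d$. I would first harvest the cheap structural consequences of the two features that shaped $T$: the cleaning of \Step{clean} and the fact that $v$ has minimum degree in $H$. Since the graph is cleaned, every edge of $H$ --- in particular every edge $(v,w)$, $w \in N$ --- lies in at least $\eps(d_v + d_w) \ge 2\eps d$ triangles of $H$; as the third vertex of such a triangle is a common neighbor of $v$ and $w$, this forces $|N_H(w) \cap N| \ge 2\eps d$ for every $w \in N$, hence $n \ge 2\eps d$ and $n = \Theta_\eps(d)$. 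Dually, on any $H$-edge $(u,w)$ the number of triangles is at most $\min(d_u,d_w)$, which together with the cleaning inequality and the resulting bound $d_w \le n/\eps \le d/\eps$ for $w \in N$ gives $d_u \le d/\eps^2$; applied to the edges joining a two-hop vertex to $N$ (every admitted two-hop vertex is by construction a neighbor of $N$), this shows every vertex of $N$ and every vertex placed in $T$ in \Step{twohop} has original degree in $[d,\, d/\eps^2]$.

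Next I would pin down $|T|$ and deduce the degree-regularity clause (i) of \Def{rtr}. The lower bound $|T| \ge 1 + n \ge 2\eps d$ is immediate. For the upper bound, a two-hop vertex $u$ admitted in \Step{twohop} has $t_u > \beta d^2$, and since $t_u \le \binom{m_u}{2}$ with $m_u := |N_H(u) \cap N|$, we get $m_u = \Omega(\sqrt{\beta}\, d)$; summing and using $\sum_u m_u \le \sum_{w\in N} d_w^{(H)} \le n\cdot d/\eps = O(d^2/\eps)$ bounds the number of two-hop vertices by $O(d/(\eps\sqrt{\beta}))$, so $|T| = \Theta_{\eps,\beta}(d)$. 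Combined with the degree window from the previous step, every vertex of $T$ has degree in $[c_1|T|,\, c_2|T|]$ with $c_1,c_2 = \poly(\eps,\beta)$, which is exactly clause (i).

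The real work is clause (ii), triangle density $\Omega(\poly(\eps,\beta))$, and this is where I expect the main obstacle. I would count: $H[N]$ has $\tfrac12\sum_{w\in N}|N_H(w)\cap N| \ge n\eps d = \Omega(\eps^2 d^2)$ edges, and every such edge, being an $H$-edge between vertices of degree $\ge d$, lies in $\ge 2\eps d$ triangles of $H$, so $3\,\#\triangle(H[N]) + \sum_{y\notin N,\, y\sim N} t_y = \Omega(\eps^3 d^3)$. If the first term dominates, then $N \subseteq T$ already carries $\Omega(\eps^3 d^3)$ triangles while $\binom{|T|}{3} = O(d^3/\poly(\eps,\beta))$, and we are done; if instead the admitted two-hop vertices account for a constant fraction of the sum, then the corresponding triangles (two vertices in $N\subseteq T$, third vertex in $T$) are distinct and number $\sum_{y\text{ admitted}} t_y = \Omega(\eps^3 d^3)$, and again we are done. \textbf{The hard case} is when $\Omega(\eps^3 d^3)$ of the triangles on $E(H[N])$ leak to two-hop vertices $y$ with $1 \le t_y \le \beta d^2$, just below the admission threshold: the naive bound (at most $O(d^2/\eps)$ such vertices, each absorbing $\le \beta d^2$ triangles) only caps the leak at $O(\beta d^4/\eps)$, which is a factor $d$ too weak. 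To close this gap I would account per edge of $H[N]$ rather than per leak vertex: for $(a,b)\in E(H[N])$, the cleaning inequality applied to $(a,b)$ --- and, crucially, to the edges $(y,a),(y,b)$ of each leak vertex $y$, which forces $y$ itself to have $\Omega_\eps(d)$ neighbors inside $N$ --- constrains the witnesses of $(a,b)$ so that a constant fraction of them must lie in $N$ or be admitted into $T$, as otherwise $(a,b)$ (or the edges from $y$ into $N$) would be too triangle-poor to survive cleaning; summing this per-edge guarantee over the $\Omega(\eps^2 d^2)$ edges gives $\#\triangle(T) = \Omega(\poly(\eps,\beta)\, d^3) = \Omega(\poly(\eps,\beta))\binom{|T|}{3}$. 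With clause (ii) in hand, taking the smaller of the two $\poly(\eps,\beta)$ parameters shows $T$ is $\Omega(\poly(\eps,\beta))$-\rtr{}, hence $\Omega(1)$-\rtr{} when $\eps,\beta = \Omega(1)$.
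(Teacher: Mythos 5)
Your treatment of clause (i) and your setup for clause (ii) track the paper's proof closely: degree bounds from \Step{clean} plus minimality of $d_v$, the bound $|T|=\Theta_{\eps,\beta}(d)$, and the count of $\Omega(\eps^3 d^3)$ triangles with at least two vertices in $N$ are all exactly the paper's steps. You also correctly isolate the real crux --- bounding the triangle mass that leaks to two-hop vertices $y$ with $t_y$ below the admission threshold $\beta d^2$, where the naive bound $O(\beta d^4/\eps)$ is a factor $d$ too weak.

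However, your proposed resolution of that hard case has a genuine gap. The cleaning condition applied to the edges $(y,a)$ and $(y,b)$ only guarantees that each of these edges lies in at least $\eps(d_y+d_a)\ge 2\eps d$ triangles of $H$ whose third vertex is a common $H$-neighbor of the two endpoints; nothing forces those third vertices into $N$, since $a$'s $H$-neighborhood can have size up to $d/\eps$ with only $O(\eps d)$ of it guaranteed inside $N$. So the claim that cleaning ``forces $y$ itself to have $\Omega_\eps(d)$ neighbors inside $N$'' does not follow, and even if it did, many neighbors in $N$ would not by itself push $t_y$ above $\beta d^2$ (those neighbors need not span many edges of $H[N]$). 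Hence the asserted per-edge guarantee that a constant fraction of each edge's triangle witnesses lie in $N$ or get admitted is unsupported. The paper closes exactly this gap with a threshold-splitting trick whose ingredients you already wrote down for the size bound: since $t_y\le m_y^2$ (your $t_y\le\binom{m_y}{2}$), one has $\sum_{y:\,t_y<\beta d^2} t_y=\sum \sqrt{t_y}\sqrt{t_y}\le \sqrt{\beta}\,d\sum_y m_y\le \sqrt{\beta}\,d\cdot d^2/\eps$, which for $\beta\le\eps^8/9$ is at most $\eps^3 d^3/3$, strictly less than the total mass $\eps^3 d^3/2$; the surviving $\Omega(\eps^3 d^3)$ triangles then lie inside $T$ and your density calculation finishes the proof. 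Substituting this bound for your per-edge argument repairs the proposal.
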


\begin{proof} Consider some output set $T$. Suppose it is removed from subgraph $H$, starting
from a seed vertex $v$.  In all calculations that follow, we are considering the properties
of the subgraph $H$. A crucial property is that every edge $(x,y)$ in $H$
participates in at least $\eps(d_x + d_y)$ triangles. Moreover, all degrees are at least $d_v$.

Since $v$ has non-zero degree in $H$, there is some edge $(u,v)$ in $H$.
The edge $(u,v)$ participates in at least $\eps(d_u + d_v)$ triangles in $H$. Observe that 
the number of triangles is at most $\min(d_u, d_v) = d_v$. Thus, $\eps(d_u + d_v) \leq d_v$,
implying $d_u \leq d_v/\eps$. In general, for any edge $(x,y)$ in $H$, $d_x = \Theta(d_y)$. For any arbitrary vertex in the two-hop neighborhood, the degree is at most $d_v/\eps^2$.

Since $(u,v)$ participates in more than $\eps d_v$ triangles in $H$, $v$ must have at least $\eps d_v$
neighbors in $H$. Moreover, every edge $(u',v)$ participates in at least $\eps d_v$ triangles.
Thus, $v$ participates in at least $\eps^2 d^2_v/2$ triangles ($\eps d_v$ triangles along every edge $(u',v)$, 
and each triangle is counted at most twice). Every triangle incident to $v$ corresponds to an edge
in the neighborhood $N$ of $v$. So $N$ contains $\eps^2 d^2_v/2$ edges. Every edge
participates in at least $\eps d_v$ triangles. Hence, there are at least $\eps^3 d^3_v/2$ triangles
with at least two vertices in $N$.

Hence, $\sum_u t_u \geq \eps^3 d^3_v/2$. For every $u$, let $b_u$ be the number of neighbors of $u$
in $N$. Observe that $t_u \leq b^2_u$, since every triangle that $u$ forms with $N$
is made by a pair of neighbors in $N$. Thus, $\sum_u \sqrt{t_u} \leq \sum_u b_u$.
The term $\sum_u b_u$ counts the number of edges incident to $N$. The size of $N$ is at most $d_v$,
and each vertex in $N$ has degree at most $d_v/\eps$ (as shown earlier). So $\sum_u \sqrt{t_u} \leq \sum_u b_u \leq d^2_v/\eps$.

Consider the sum $\sum_{u: t_u < \eps^8 d^2_v/9} t_u$. We can bound this as follows:
\begin{eqnarray*} \label{eq:sqrt-bd}
\sum_{u: t_u < \eps^8 d^2_v/9} t_u = \sum_{u: t_u < \eps^8 d^2_v/9} \sqrt{t_u} \sqrt{t_u} & \leq & (\eps^4/3) d_v \sum_u \sqrt{t_u} \\
& \leq & (\eps^4/3) d_v \cdot d^2_v/\eps = \eps^3 d^3_v/3 
\end{eqnarray*}
Thus, for $\beta = \eps^8/9$,  we get that $\sum_{u: t_u \geq \beta d^2_v} t_u \geq \eps^3 d^3_v/2 - \eps^3 d^3_v/3 = \Omega(d^3_v)$. 
Hence, \Step{twohop} picks up $\Omega(d^3_v)$ triangles inside the set $S$. To bound the triangle density
of $S$, we need to upper bound to size $S$. The size of $N$ is at most $d_v$. Each vertex in $N$ has degree $\Theta(d_v)$,
so there are $\Theta(d^3_v)$ triangles incident to $N$. Hence, $\sum_u t_u = O(d^3_v)$. Each $u$ added to $S$
has $t_u = \Omega(d^2_v)$, so at most $O(d_v)$ such vertices are added to $S$ by \Step{twohop}. We conclude
that $S$ has $O(d_v)$ vertices, so the triangle density of $S$ is $\Omega(1)$.

Finally, we bound the degrees in $S$. All vertices are within distance $2$ of the seed vertex $v$. 
As proven above, for any pair of neighboring vertices, their degrees are within constant factors of each other.
Hence, all degrees in $S$ are $\Theta(d_v)$ and $S$ has size $O(d_v)$. So $S$ is $\Theta(1)$-\rtr.
\end{proof}

We have proved the first part of \Thm{main}. For the second part, we need to show that 
a constant fraction of any $\Omega(1)$-\rtr{} set is contained in $\cT$. 

\begin{lemma} \label{lem:single} Consider any $\alpha$-\rtr{} set $S$. Then, there exists 
a setting of $\eps, \beta$ as $\poly(\alpha)$, such that an $\Omega(\alpha)$
fraction of vertices of $S$ is present in the family $\cT$. (Where $\cT$ is the
output of \mainproc$(\eps, \beta)$.)
\end{lemma}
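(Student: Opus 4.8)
Write $n=\abs{S}$; since $S$ is $\alpha$-\rtr, every vertex of $S$ has degree (in $G$) at most $n/\alpha$ and $S$ contains at least $\alpha\binom{n}{3}=\Omega(\alpha n^3)$ triangles. The plan is to isolate a \emph{robust core} $S^\ast\subseteq S$ with $\abs{S^\ast}=\Omega(\poly(\alpha))\cdot n$ that \mainproc{} cannot ``erode'' by cleaning, and then observe that the only way the algorithm can remove $S^\ast$ from $H$ is by extracting its vertices into output sets $T$ --- which is precisely what we want. Throughout I treat $\eps,\beta$ as $\poly(\alpha)$ quantities to be fixed along the way.

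\textbf{Step 1 (building the core).} I would obtain $S^\ast$ by a two-level peeling on the subgraph induced by $S$: repeatedly delete (a) any vertex having fewer than $\tau_1=\Theta(\sqrt\alpha)\,n$ neighbours inside the current set, or (b) any edge lying in fewer than $\tau_2=\Theta(\alpha)\,n$ triangles inside the current set, until nothing more can be deleted. Each deleted edge kills at most $\tau_2$ triangles and each deleted vertex kills at most $\tau_1^2$ triangles, so with the hidden constants chosen small the total number of triangles destroyed is at most half of the $\Omega(\alpha n^3)$ originally present. Since all surviving triangles lie inside $S^\ast$, we get $\abs{S^\ast}^3=\Omega(\alpha n^3)$, hence $\abs{S^\ast}=\Omega(\alpha^{1/3})\,n$; and by construction every edge inside $S^\ast$ sits in at least $\tau_2$ triangles whose third vertex is also in $S^\ast$, and every vertex of $S^\ast$ has at least $\tau_1$ neighbours inside $S^\ast$.

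\textbf{Steps 2--3 (robustness under cleaning, and conclusion).} Fix $\gamma=\Theta(\alpha)$ with $\gamma n\le\tau_2/2$, and fix $\eps=\Theta(\alpha^2)$ small enough that $\eps(d_x+d_y)\le 2\eps n/\alpha<\tau_2/2$ for all $x,y\in S$. I would prove, by induction over the execution, that \emph{as long as fewer than $\gamma n$ vertices of $S^\ast$ have ever been placed into output sets, no edge between two surviving core vertices is deleted by \Step{clean}, and no surviving core vertex ever becomes isolated}: at the start of each cleaning phase all edges among the (at most $\gamma n$-diminished) current core are present, since extractions only remove edges incident to removed vertices; if $(x,y)$ were the first internal core edge cleaned, it would still have at least $\tau_2-\gamma n\ge\tau_2/2$ triangles with third vertex still in the core (those edges being present), exceeding the cleaning threshold $\eps(d_x+d_y)$ --- contradiction. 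Hence each surviving core vertex retains at least $\tau_1-\gamma n=\Omega(\sqrt\alpha)\,n>0$ core-neighbours and is never deleted as a singleton. Now the outer loop of \mainproc{} empties $H$, so every vertex of $S^\ast$ is eventually removed; by the invariant a core vertex can leave $H$ only by being extracted into some $T$, \emph{unless} $\gamma n$ core vertices have already been extracted. If strictly fewer than $\gamma n$ core vertices were ever extracted over the whole run, the invariant would hold throughout, forcing all $\abs{S^\ast}>\gamma n$ of them to be extracted --- a contradiction. Thus at least $\gamma n=\Omega(\alpha)\,n$ vertices of $S^\ast\subseteq S$ appear in $\bigcup_{T\in\cT}T$. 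Taking $\eps=\Theta(\alpha^2)$ as above and $\beta=\eps^8/9$ (so \Lem{rtr} applies and the output is $\Omega(1)$-\rtr) finishes the proof, with both parameters $\poly(\alpha)$.

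\textbf{Main obstacle.} The subtle point is the invariant in Step 2: \Step{clean} tests triangle counts in the \emph{current} graph against \emph{original} degrees, so one must rule out a cascade in which cleaning boundary edges of $S^\ast$ destroys enough triangles to make core-internal edges cleanable. This is exactly why the core must be defined with a lower bound on the number of \emph{internal} triangles per internal edge (not just per vertex), and why $\eps$ must be taken polynomially small in $\alpha$; the two-level peeling is what simultaneously supplies that property and keeps $\abs{S^\ast}$ a $\poly(\alpha)$ fraction of $n$. I note that, unlike \Thm{wellsep}, this argument does not use \Step{twohop} at all: it relies only on the facts that extractions remove vertices and that the outer loop terminates.
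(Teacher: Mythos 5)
Your argument is correct (up to one phrasing fix below), but it takes a genuinely different route from the paper. The paper proves this lemma with a short global charging argument, with no core at all: assume for contradiction that at most an $O(\alpha)$ fraction of $S$ is ever extracted, and count $S$-triangles destroyed. Each edge cleaned by \Step{clean} is, at that moment, in at most $\eps(d_u+d_v)\le 2\eps|S|/\alpha$ triangles, so the at most $|S|^2/2$ internal edges account for $O(\eps|S|^3/\alpha)$ destroyed triangles; each extracted vertex of $S$ accounts for at most $|S|^2/2$ more. With $\eps=\poly(\alpha)$ these totals fall short of the $\alpha\binom{|S|}{3}$ triangles of $S$, yet \mainproc{} empties $H$ and hence destroys them all --- contradiction, giving the $\Omega(\alpha)$ fraction directly. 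Your proof instead builds a peeled core $S^\ast$ and runs an execution invariant showing core-internal edges survive \Step{clean} until $\Omega(\alpha)|S|$ core vertices are extracted. That is more machinery than the paper needs for this lemma, but it buys a stronger structural statement (a core that can only leave $H$ via extraction), which is closer in spirit to what the paper's ``challenges'' discussion describes and to what is needed for the well-separated case (\Thm{wellsep}); both routes yield the same $\Omega(\alpha)$ bound, need only $\eps=\poly(\alpha)$, and make no use of \Step{twohop} or $\beta$ beyond invoking \Lem{rtr} for the output quality. One imprecision you should fix: peeling rule (b) deletes edges whose endpoints may both survive, so it is \emph{not} true that ``every edge inside $S^\ast$'' lies in $\ge\tau_2$ internal triangles; the guarantee holds only for the surviving (peeled) edge set $E^\ast$, and your invariant and the non-isolation claim must be stated for $E^\ast$-edges and $E^\ast$-neighbours. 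As you note, the supporting triangles of an $E^\ast$-edge use only $E^\ast$-edges, so the first-cleaned-edge contradiction and the degree count $\tau_1-\gamma n>0$ go through verbatim under this restriction; with that wording corrected the proof is sound.
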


\begin{proof} Edges from $G$ are removed in two steps. First, in $H$, \Step{clean} removes edges.
Second, when a vertex is put into an output cluster, all incident edges are removed. 
Suppose $S$ is $\alpha$-\rtr{}.
Assume, for contradiction's sake, that at most $\alpha |S|/8$ vertices of $S$
are put in output clusters. Let $s$ denote the size of $S$; note that all degrees
of vertices in $S$ lie in $[\alpha s,s/\alpha]$. We will set $\eps$ to be (anything) $\leq \alpha^2/8$.

Instead of accounting for vertices, we will look at the triangles contained in $S$ (call these $S$-triangles). A triangle gets removed
if any of its vertices is removed. Let us count the number of $S$-triangles removed
by edge deletions in \Step{clean}. When such an edge $(u,v)$ is removed, it participates in at most $\eps(d_u + d_v) \leq \alpha^2/8 \cdot s/\alpha =$
$\alpha d/4$ triangles.
There are at most $s^2/2$ edges in $S$, so all these operations remove at most $s^2/2 \times \alpha s/4 = \alpha ds^2/8$ triangles.

Now, we bound the number of $S$-triangles removed when vertices are put into output clusters. By assumption, there are 
at most $\alpha s/8$ such vertices. Each vertex participates in at most $s^2/2$ $S$-triangles. So there are
at most $\alpha s^3/16$ triangles. In total, at most $(1/8 +1/16) \alpha s^3$ triangles are removed.

On the other hand, since $S$ is $\alpha$-triangle rich, there are at least $\alpha {s \choose 3} \geq \alpha s^3/6$ triangles.
This is more than the number of triangles removed by \mainproc. Observe that all triangles are removed by the end of \mainproc,
so this is a contradiction. Hence, at least an $\Omega(\alpha)$ fraction of vertices of $S$ must be put in output clusters.
\end{proof}

The previous two lemmas prove \Thm{main}. We show that the output of \mainproc{} is $\Omega(1)$-\rtr,
and that a constant fraction of every $\Omega(1)$-\rtr{} set $S$ is contained in the output.

\subsection{The well separated case} \label{sec:wellsep}

As argued earlier, we need some stronger conditions that prove that a \emph{single} output
set contains a constant fraction of $S$. This stronger condition is that of well-separated,
given in \Def{wellsep}, and the corresponding theorem is \Thm{wellsep}.

This proof of \Thm{wellsep} is more complicated, since we need to argue that there is an extraction step
that takes out a constant fraction of $S$. The proof of \Lem{single} shows that 
there are extractions that remove a constant fraction of $S$. For \Thm{wellsep},
we need to argue that there is a \emph{single} extraction that does the above.
This requires a careful analysis of how vertices in $S$ get deleted from
the current subgraph $H$. The parameter $\beta$ plays an important role in the proof.
The threshold in \Step{twohop} is used to prevent many vertices of $S$ ending up
in other extractions. Roughly speaking, only an extraction seeded inside $S$ can remove
a significant part of $S$.

We encapsulate the analysis as the following lemma. From the lemma, \Thm{wellsep} follows
directly.

\begin{lemma} \label{lem:wellsep} Consider any $\alpha$-\rtr{} set $S$ that is $\delta$-well separated,
where $\delta$ is sufficiently smaller than $\alpha$. Then, there exists 
a setting of $\eps, \beta$ as $\poly(\alpha)$, such that an $\Omega(\poly(\alpha))$
fraction of vertices of $S$ is present in the family $\cT$. (All constants are polynomially
related to each other.)
\end{lemma}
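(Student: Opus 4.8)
The plan is to sharpen the counting argument of Lemma~\ref{lem:single} so that a \emph{single} extraction, rather than the union of all extractions, absorbs a constant fraction of $S$. The first step is to isolate a robust ``core'' $S^\circ \subseteq S$: by the $\alpha$-\rtr{} property every vertex of $S$ has degree $\Theta(s)$ and sits in $\Theta(s^2)$ $S$-triangles, so removing any $o(\alpha s)$ vertices of $S$ (by earlier extractions not seeded inside $S$, or by cleaning) destroys only an $o(1)$-fraction of $S$-triangles, and leaves a subset $S^\circ$ with $|S^\circ| = (1 - o(1))s$ that is still $\Omega(\alpha)$-triangle-dense and still has all edges present in $H$ at the relevant time. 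I would track the ``first'' extraction that deletes any vertex of $S^\circ$ from the current subgraph $H$; call it extraction $E^\star$, with seed $v^\star$ and output set $T^\star$. Before $E^\star$ occurs, cleaning can only have removed edges of $S$ in few triangles (at most $\eps(d_u+d_v) \le \alpha s/4$ each, as in Lemma~\ref{lem:single}, with $\eps \le \alpha^2/8$), so the total number of $S$-triangles destroyed by cleaning up to that point is at most $s^2/2 \cdot \alpha s/4 = \alpha s^3/8 < \alpha s^3/6$, meaning most of $S$ survives intact in $H$ right before $E^\star$.

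The heart of the argument is to show $E^\star$ removes $\Omega(\poly(\alpha)) \cdot s$ vertices of $S$. There are two cases. Case (a): $v^\star \in S$ (or in $S^\circ$). Then since $v^\star$ has degree $\Theta(s)$ in $H$ and $v^\star$ is the \emph{lowest}-degree vertex of $H$, every vertex of $H$ has degree $\Omega(s)$, so $d_{v^\star} = \Theta(s)$; the neighborhood $N$ of $v^\star$ contains $\Theta(\alpha s^2)$ edges of $S$ (the $S$-triangles through $v^\star$), and the Cauchy--Schwarz calculation from the proof of Lemma~\ref{lem:rtr} shows Step~\ref{step:twohop} with $\beta = \eps^8/9$ picks up vertices accounting for $\Omega(\poly(\eps,\alpha)) s^3$ $S$-triangles; since each vertex lies in $O(s^2)$ $S$-triangles, $T^\star$ must contain $\Omega(\poly(\alpha)) s$ vertices of $S$. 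Case (b): $v^\star \notin S$. Here I use well-separatedness. Since $E^\star$ deletes some vertex $w \in S^\circ$, $w$ is in $T^\star$, so either $w \in N$ (i.e.\ $(v^\star,w)$ is an edge of $H$) or $w$ was added in Step~\ref{step:twohop} with $t_w > \beta d_{v^\star}^2$ triangles from $w$ to $N$. In either case $w$ participates in $\Omega(\poly(\alpha)) s^2$ triangles of $H$ with a vertex of $N \cup \{v^\star\}$; but $\delta$-well-separatedness bounds the number of such triangles whose third vertex lies \emph{outside} $S$ by $\delta s \cdot d_w = O(\delta s^2)$ (summed suitably over edges at $w$, this is $o(\poly(\alpha) s^2)$ once $\delta \ll \poly(\alpha)$). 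Hence a $\poly(\alpha)$-fraction of these triangles has all three vertices in $S$, which forces $N$ (hence $T^\star$) to intersect $S$ in $\Omega(\poly(\alpha)) s$ vertices; and then, rerunning the triangle-density counting of Lemma~\ref{lem:rtr} restricted to the $\Omega(\alpha)$-dense surviving copy of $S$, Step~\ref{step:twohop} pulls the bulk of $S$ into $T^\star$.

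I expect the main obstacle to be Case (b) — specifically, quantifying precisely how ``$E^\star$ touches $S^\circ$'' propagates into ``$E^\star$ contains a constant fraction of $S$'', because the interaction between the two-hop threshold $\beta d_{v^\star}^2$ and the well-separation parameter $\delta$ must be calibrated so that the triangles forcing $w$ into $T^\star$ are genuinely \emph{internal} to $S$ and not ``leaking'' triangles. One must choose, in order, $\eps \le \alpha^2/8$, then $\beta = \eps^8/9$ (forced by Lemma~\ref{lem:rtr}), and finally require $\delta$ small enough as a polynomial in $\alpha$ that $\delta s^2$ is dominated by the $\beta d_{v^\star}^2 = \Omega(\eps^{16} s^2)$ threshold; keeping this chain of polynomial dependencies consistent is the delicate bookkeeping. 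A secondary subtlety is ensuring that between the start of \mainproc{} and extraction $E^\star$, no \emph{single} earlier extraction already removed a constant fraction of $S$ — but if one did, we are already done, so this is handled by taking $E^\star$ to be the first extraction removing a $\Theta(\poly(\alpha))$-fraction of $S$ rather than literally the first to touch $S^\circ$, and arguing the core is intact up to that point by the triangle-budget computation above.
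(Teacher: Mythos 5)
Your overall skeleton matches the paper's: look at the first extraction that touches $S$, split into cases by how it touches ($v^\star\in S$, a neighbor of $v^\star$ in $S$, or a vertex of $S$ pulled in at Step~\ref{step:twohop}), and in the two-hop case compare the well-separation budget $\delta\, d_{v^\star}|S|$ against the threshold $\beta d_{v^\star}^2$ with $\eps,\beta=\poly(\alpha)$ and $\delta$ yet smaller. Your Case (b) is essentially the paper's Cases 2 and 3 (the paper phrases the two-hop subcase as ``cannot be the first touching event,'' while you directly extract $|N\cap S|\ge \beta d_{v^\star}/2$; either works). But Case (a), the seed-in-$S$ case, has a genuine gap as written. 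You lean on a ``core'' $S^\circ$ whose claimed properties do not follow from the aggregate triangle budget: $\alpha$-\rtr{}ness gives no per-edge or per-vertex triangle guarantee inside $S$, so cleaning can legitimately delete edges inside any putative core (the budget of Lemma~\ref{lem:single} bounds total destroyed $S$-triangles, not the damage at the particular vertex $v^\star$), and the assertion that $S^\circ$ ``still has all edges present in $H$'' is unjustified. Consequently the claim that $N$ contains $\Theta(\alpha s^2)$ edges of $S$ is unsupported. Worse, the subsequent appeal to the Cauchy--Schwarz counting of Lemma~\ref{lem:rtr} cannot close the case even if it did hold: that calculation counts \emph{all} triangles of $H$ with two vertices in $N$ and says nothing about membership in $S$, so it does not show that the vertices captured by Step~\ref{step:twohop} carry $S$-triangles, and hence does not yield $\Omega(\poly(\alpha))s$ vertices of $S$ in $T^\star$.

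The fix is the argument you already use in Case (b), and it is how the paper handles Case 1: no core is needed. Since $v^\star$ is a seed it has nonzero degree in the cleaned graph, so some edge $(u,v^\star)$ survives in $H$; the cleaning invariant forces this edge to lie in at least $\eps d_{v^\star}\ge \eps\alpha|S|$ triangles of $H$, well-separation caps the triangles whose third vertex leaves $S$ at $\delta|S|$, and all remaining third vertices are neighbors of $v^\star$, so already $|N\cap S|=\Omega((\eps\alpha-\delta)|S|)$ and $T^\star\supseteq N$ suffices. Two smaller issues: in Case (b) the claim that $w$ participates in $\Omega(\poly(\alpha))s^2$ relevant triangles is false in the subcase $w\in N$ (there it is only $\Omega(\eps\alpha s)$ on the single edge $(v^\star,w)$, which is still enough against the $\delta|S|$ leakage), and the concluding assertion that Step~\ref{step:twohop} then ``pulls the bulk of $S$'' into $T^\star$ is unsupported --- but also unnecessary, since the lemma only asks for an $\Omega(\poly(\alpha))$ fraction.
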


\begin{proof} (of \Thm{wellsep}) By \Lem{single}, a non-zero number of vertices of $S$ are
present in output clusters. Consider the first vertex of the any output set $T$
that intersects $S$. (There must exist some such set.) 
Following the notation in \mainproc, $v$ denotes the seed vertex of this output set. Then there are three possibilities: either $v\in S$, or $N(v)\cap S\neq \varnothing,$ or a vertex from $S$ is added to $T$ in \Step{twohop}.
We handle each case separately. We assume that $\delta < \alpha^2/4$, for a sufficiently large 
constant $c$. We set $\eps = \alpha^2$ and $\beta$. 
%\Sesh{clean up the various constants and poly in $\alpha$.}\SB{Will come back to this later today.}

{\em Case 1, $v \in S$:} At this stage, $v$ has non-zero degree in $H$.
So there is an edge $(u,v)$ containing at least $\eps d_v$ triangles in $H$.
Since $S$ is $\delta$-well separated, there are at most $\delta d$ triangles (containing $u,v$)
with a third vertex outside $S$. 
Thus, in $H$, there are at least $(\eps - \delta)d \geq \eps d/2$ triangles (containing $u,v$)
whose third vertex is inside $S$ for all $k\geq 3$. (Recall, we set $\eps = \alpha^2$ and $\delta < \alpha^2/4$.)
All of these ``third vertices" are obviously neighbors
of $v$. Hence, the neighborhood $N$ of $v$ will contain at least $\Omega(\eps d_v)$ vertices of $S$.
The output cluster contains all of these vertices.
Since $S$ is $\alpha$-\rtr, $d_v \geq \alpha |S|$. Hence, the neighborhood $N$ contains $\Omega(\poly(\alpha)|S|)$ vertices
of $S$.

{\em Case 2, a neighbor of $v$ is in $S$:} Suppose this neighbor is $u$. The edge $(v,u)$
is in $H$. By the same argument as the previous case, this edge has at least $\Omega(\eps d_v)$
triangles in $H$ with the third vertex in $S$. All of these vertices are neighbors of $v$,
and will be part of the output cluster.

{\em Case 3, \Step{twohop} removes a vertex from $S$:} We can assume that neither Case 1 or 2
holds. Hence, $v \cup N$ lies outside $S$. Suppose $u \in S$ is added to the output cluster.
The vertex $u$ must participate in at least $\beta d^2_v$ triangles whose other vertices are in $N$.
Obviously, each of these triangles participates in an edge $(u,w)$ (for $w \in N$). Since
$S$ is well-separated, each such edge $(u,w)$ can be in at most $\delta |S|$ triangles
whose third vertex is outside $S$. 

Since the size of $N$ is at most $d_v$, there are at most $\delta d_v|S|$ triangles
that $u$ can form with two vertices in $N$. Note that $d_u \leq d_v/\eps^2$ (since $u$ is distance
two away from $v$ after cleaning), and $|S| \leq d_u/\alpha$ ($S$ is $\alpha$-\rtr{} and $u \in S$).
Thus, $\delta d_v|S| \leq (\delta/(\alpha\eps^2)) d^2_v$. Since we set $\beta = \alpha^c$ for sufficiently large
constant $c$, we get that $\delta d_v|S| < \beta d^2_v$. 

So $u$ does not participate in enough triangles to be selected in \Step{twohop}. This case cannot happen.
In other words, \Step{twohop} cannot remove a vertex from $S$.
We conclude that Case 3 cannot happen.
\end{proof}

\subsection{Running time} \label{sec:runtime}

We provide an upper bound on the running time of the algorithm as presented in \Alg{mainproc}. The most expensive step is a triangle
enumeration step to initialize data structures. The best known practical method for triangle enumeration is the classic 
algorithm of Chiba-Nishizeki~\cite{ChNi85}, and variants by Schank-Wagener~\cite{ScWa05}. The running time of the Chiba-Nishizeki
algorithm is $O(m\alpha)$, where $\alpha$ is the graph degeneracy (Sec 2.4 of~\cite{Se23}). In practice, graph degeneracies
are small and this algorithm is extremely efficient~\cite{Se23}.

\begin{theorem} \label{thm:runtime}
    The running time of \mainproc{} is $O(t + (m+n)\log n)$, where $t$ is the running time of triangle enumeration. 
\end{theorem}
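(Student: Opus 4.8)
The plan is to analyze \Alg{mainproc} step-by-step and show that every operation outside the triangle enumeration fits within the claimed $O(t + (m+n)\log n)$ budget, amortized over the whole execution. First I would set up the key data structures built on top of the triangle listing: for each edge $e=(u,v)$, a counter $\tau(e)$ storing the number of triangles of $H$ through $e$; for each vertex, its current degree $d_v^{(H)}$ in $H$; and a priority queue (min-heap) keyed on current degree to support ``pick the lowest-degree vertex'' in $O(\log n)$ time. Computing all the $\tau(e)$ values initially is exactly (a constant-factor overhead on) the triangle enumeration, contributing the $t$ term; the heap is built in $O(n)$ and every subsequent heap operation costs $O(\log n)$.

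The core of the argument is an amortization/charging scheme over edge deletions. Every edge of $G$ is deleted at most once over the entire run of \mainproc{} — either in the cleaning loop (\Step{clean}) or because one of its endpoints is placed in an output set $T$. I would argue that the total work is bounded by $O(1)$ plus a small number of heap updates \emph{per edge deletion}: when an edge $(x,y)$ is deleted, every triangle $(x,y,z)$ it was part of is destroyed, so we decrement $\tau((x,z))$ and $\tau((y,z))$, and (if they drop below the threshold) enqueue those edges into the worklist driving \Step{clean}; we also decrement $d_x^{(H)}, d_y^{(H)}$ and update their heap keys. The number of triangles through an edge is at most $\min(d_x,d_y) \le n$, but summed over \emph{all} edges the number of (edge, incident triangle) incidences is $O(t)$ — it is at most a constant times the number of triangles listed, since each triangle contributes to three edges — so the total decrement work across the whole algorithm is $O(t)$, and the total number of heap updates triggered is $O(m)$, each costing $O(\log n)$. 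Checking the condition in the inner \texttt{while} of \Step{clean} is handled by keeping a worklist of currently-violating edges rather than rescanning, so it adds no overhead beyond the per-deletion bookkeeping.

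It remains to bound the seeded-extraction work inside one iteration of the outer loop. Picking $v$ is one heap extraction, $O(\log n)$. Constructing $N$ costs $O(d_v^{(H)})$. The \Step{twohop} loop examines each vertex $u$ adjacent to $N$ and computes $t_u$, the number of triangles from $u$ into $N$; this can be done by iterating over the already-stored triangle incidences of the edges inside and leaving $N$, so the cost of one extraction is $O\!\left(\sum_{w \in N} d_w^{(H)} + (\text{triangles touching } N)\right)$. Crucially, when $T$ is deleted, all the edges counted in that sum disappear from $H$, so each edge is charged to at most one extraction; globally this is again $O(m + t)$. Finally, deleting isolated vertices is charged to the deletion of their last incident edge. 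Summing: triangle enumeration and triangle-incidence bookkeeping give $O(t)$; edge deletions, heap maintenance, and extraction scans give $O((m+n)\log n)$; hence the total is $O(t + (m+n)\log n)$.

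The main obstacle is making the charging argument for \Step{twohop} airtight: one must be sure that computing $t_u$ for every two-hop vertex $u$ does not secretly cost more than ``the edges and triangles incident to $N$,'' and that this quantity is indeed consumed (removed from $H$) by the extraction, so it is never paid for twice. A subtlety is that a vertex $u$ can be adjacent to $N$ without itself being added to $T$ (if $t_u \le \beta d_v^2$), so $u$'s \emph{own} future edges are not deleted now — but the work spent on $u$ in this iteration is $O(\,|\{w \in N : (u,w)\in H\}|\,)$ plus the triangles through those specific edges, and those edges $(u,w)$ with $w \in N$ \emph{are} deleted when $T$ leaves, so the charge still lands on edges that vanish. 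Pinning down this accounting (and noting that $t \ge m$ is not assumed, so the $\log n$ factors genuinely need the $(m+n)$ term) is the only real content; everything else is standard bookkeeping.
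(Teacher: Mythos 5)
Your proposal is correct and follows essentially the same route as the paper's proof: initialize per-edge triangle information by one triangle enumeration ($O(t)$), maintain a bad-edge worklist and a degree-keyed min-heap ($O(\log n)$ per operation), and amortize all cleaning and extraction work by charging it to edges and triangle incidences that are permanently removed from $H$, giving $O(t+(m+n)\log n)$ overall. Your explicit charging of the two-hop scans to the triangles with an edge inside $\{v\}\cup N$ (which are destroyed when $T$ is extracted) is just a slightly more careful spelling-out of the accounting the paper states tersely, so there is nothing further to fix.
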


\begin{proof}
    Assume that $G$ is given as an adjacency list. Each adjacency list is stored as a dictionary data structure, for convenience. At every iteration, we maintain the following data structures about our current subgraph $H$. 
    \begin{asparaitem}
        \item A list of triangles $T_H$ that are contained in $H$, indexed by edge. So for each edge $e$, we have a dictionary data
structure containing all the triangles in $H$ that contain $e$.
%         \item A list $C_H$ of the number of triangles on a vertex in $H$.
        \item A list of bad edges $B$ that do not satisfy the criterion in \Step{clean}. 
        \item A min priority queue of all vertices keyed by degree. 
    \end{asparaitem}
    The time to initialize these data structures is $O(t+m)$. We perform a triangle enumeration in $G$ to 
get the number of triangles on each edge ($T_G$). The list $B$ can be initialized by looping over all edges.

    Consider the removal of edge $(u,v)$ in \Step{clean}. For every triangle of the form $(u,v,w)$ that it participates in, we must remove this triangle from the corresponding list in $T_H$
 of every edge $e'$ in it ($e'$ is of the form $(u,w)$ or $(v,w)$). On performing the removal, we check if $e'$ is now a bad edge, to add to $B$. If the removal causes $u$ or $v$ to not have any neighbors in $H$, then we remove the vertex from the priority queue. Deletion/addition in a priority queue takes time $O(\log n)$. Besides that, we have a constant number of operations in the number of triangles on the bad edge $e$. All removals due to \Step{clean} and isolated vertex deletions take time $O(t + n\log n)$.

    The next step is to pick the lowest degree vertex $v$ from our priority queue. Constructing $N$, the neighborhood of $v$, takes $O(d_v)$ time by traversing the adjacency list. 
Consider the set of edges contained in $\{v\}\cup N$. For each edge, we look up the triangles incident on it via $T_H$ and consider these vertices. The number of distinct triangle-forming vertices is at most the number of triangles in $H$ with two vertices in $\{v\} \cup N$. We can then check the \Step{twohop} condition for each such vertex, and add the required vertices to our output set. 
We now remove the output set $T$ from the subgraph $H$. This requires changes to the adjacency list, the list of bad edges, and our priority queue. 
These are all implemented by deleting from $H$ all the remaining edges incident to every vertex in $T$.
The number of vertices removed from the priority queue is $|T|$, 
and the number of changes to the data structures is at most $O(\sum_{u\in T} d_u)$, the number of edges incident to vertices in $T$. 
Each deletion operation takes time at most $O(\log n)$. Combined over all output sets, the running time of all of these operations
is at most $O(\sum_T \sum_{u \in T} d_u\log n + n\log n) = O(m+n)\log n$. The total running time, including
the data structure initialization, is $O(t + (m+n)\log n)$.
\end{proof}

\section{Running \mainproc{} on Real Networks} \label{sec:prac}

In this section, we discuss aspects of the implementation of \mainproc. These are a collection of heuristics and parameter
settings to improve empirical performance.

\paragraph{The growing heuristic:} Recall that the output of \mainproc{} is a collection $\cT$ of disjoint \rtr{} sets.
There are vertices that are not captured in any of these sets, and lead to a loss of coverage. We take inspiration
from the proof of \Thm{wellsep} to design a simple growing heuristic. In the proof of \Thm{wellsep}, observe
that Case 3 can never happen. For a given well separated set $S$, there is a set $T$ that contains a constant fraction
of $S$, such that the seed vertex for $T$ is either in $S$ or a neighbor (it cannot be at distance $2$). Intuitively,
we might expect that most of $S$ is present within distance $1$ of $T$. Towards that, we perform the following
growing heuristic that takes a parameter $k$. for every $v$ unassigned to a set in $\cT$,
add $v$ to a $T$ such that $v$ has at least $k$ neighbors in $T$. (If there are multiple, choose the set with the most neighbors.)
This heuristic increases the coverage of the output, at the cost of reducing the edge density. In practice,
we see a negligible reduction in density and increased coverage. For all our experimental results,
we use this heuristic on top of \mainproc, setting $k$ to be $10$.

\paragraph{The choice of $\eps$:} The parameter $\eps$ used in \Step{clean} plays a central role
in the algorithm. In practice, we set $\eps = 0.1$ for all datasets. We do not see significant differences in varying $\eps$ between $0.05$ and $0.3$.

\paragraph{Doing away with $\beta$ by a greedy sweep:} Recall that $\beta$ is used to set the threshold in \Step{twohop}. 
In practice, we follow a greedy heuristic to pick the two-hop vertices added to $T$. This heuristic is inspired
by the proof of \Lem{rtr}. In the analysis for the two-hop vertices, we use partial sums based on thresholding
by $t_u$, the number of triangles $u$ forms with $N$. In our implementation of \mainproc, we first sort the vertices outside $N$
in decreasing order of the number of triangles. We add the vertices to $N$ in order, and keep track of the current edge density.
Finally, we pick the prefix of (the sorted) vertices whose addition maximizes the edge density.

\paragraph{Improving the space complexity with triangle recomputation:} This is an implementation trick
that improves on the proof of \Thm{runtime}. As stated, the proof requires a storage of all the triangles.
We do away with this data structure by only storing the triangle counts on edges. Every time we need to list
of triangles on an edge, we simple recompute these triangles. On a careful look at the implementation,
we can observe that this recomputation is only required a constant number of times per edge.
So we only lose a constant factor on the running time, but improve storage to $O(m)$. This leads
to significant improvements in practical running time.

\section{Experimental results} \label{sec:experiments}
\begin{table}
    \centering
    \begin{tabular}{|c|c|c|c|c|c||c|H|}
    \hline
         Network & $|V|$ & $|E|$ & $\#T$ & $\overline{d}$ & ED & \#Sets & \#Sing. \\
         \hline
         EUEmail & 1k & 16k & 105k & 16 & 3.2e${-2}$& 161 & 384\\
         Hamsterster & 2.4k & 16.6k & 53k & 7  & 5.6e${-3}$ & 271 & 941\\
         CaAstroPh & 18.7k & 198k & 1.4M & 11  & 1.1e${-3}$ & 2k & 7k\\
         Epinions & 76k & 406k & 1.6M& 5 & 1.4e${-4}$ & 1.2k & 71k\\
         Amazon & 335k & 926k & 667k & 3 & 1.7e${-5}$ & 38k & 192k\\
         DBLP & 317k & 1M & 2.2M & 3 & 2.1e${-5}$ & 57k & 119k\\
         Youtube& 1.1M & 3M & 3M & 3 &4.6e${-6}$ & 15k & 1M\\
         Skitter& 1.7M & 11M & 29M & 7 & 7.7e${-6}$ & 40k & 1.5M\\
         Wikipedia & 1.8M & 25M & 51M & 14 & 1.6e${-6}$ & 31k & 1.6M\\
         Livejournal & 4M & 35M & 178M & 9 & 4.3e${-6}$ & 188k & 3M \\
         Orkut & 3M & 117M & 628M & 38 & 2.5e${-5}$ & 57k & 2.3M \\
         \hline
         
         \end{tabular}

    \caption{A summary of all the datasets used in our evaluation. We list, respectively, the number of vertices, edges, triangles, average degree, and edge density of the network. In the last column, we list the number of sets output by \mainproc.}
    \label{tab:summary}
    \vspace{-1cm}
\end{table}
\begin{table*}[]

\begin{tabular}{|l||rrrrrl||rrrrrl|}
\hline

\multicolumn{13}{|c|}{Coverage: Percentage of vertices in sets of size $\geq 5$ with density above a threshold}\\
        \hline
        Threshold & \multicolumn{6}{c||}{0.5}                    & \multicolumn{6}{c|}{0.8}                    \\ \hline 
Network & \multicolumn{1}{l|}{RTREx}   & \multicolumn{1}{l|}{Louvain} & \multicolumn{1}{l|}{$k$-Truss}   & \multicolumn{1}{l|}{$k$-Core} & \multicolumn{1}{l|}{Greedy} & Flowless & \multicolumn{1}{l|}{RTREx}   & \multicolumn{1}{l|}{Louvain} & \multicolumn{1}{l|}{$k$-Truss}   & \multicolumn{1}{l|}{$k$-Core} & \multicolumn{1}{l|}{Greedy} & Flowless          \\ \hline \hline
EUEmail          & \multicolumn{1}{r|}{\textbf{33.40\%}} & \multicolumn{1}{r|}{2.99\%}           & \multicolumn{1}{r|}{8.46\%}           & \multicolumn{1}{r|}{9.05\%}        & \multicolumn{1}{r|}{2.99\%}          & \multicolumn{1}{r||}{0.00\%} & \multicolumn{1}{r|}{\textbf{23.20\%}} & \multicolumn{1}{r|}{0.00\%}           & \multicolumn{1}{r|}{4.48\%}           & \multicolumn{1}{r|}{0.00\%}        & \multicolumn{1}{r|}{0.00\%}          & \multicolumn{1}{r|}{0.00\%} \\ \hline
Hamsterster      & \multicolumn{1}{r|}{\textbf{45.20\%}} & \multicolumn{1}{r|}{32.80\%}          & \multicolumn{1}{r|}{24.90\%}          & \multicolumn{1}{r|}{6.63\%}       & \multicolumn{1}{r|}{1.36\%}          & \multicolumn{1}{r||}{4.49\%} & \multicolumn{1}{r|}{\textbf{43.50\%}} & \multicolumn{1}{r|}{14.00\%}          & \multicolumn{1}{r|}{21.60\%}          & \multicolumn{1}{r|}{6.43\%}       & \multicolumn{1}{r|}{0.87\%}          & \multicolumn{1}{r|}{1.90\%} \\ \hline
CaAstroPh        & \multicolumn{1}{r|}{\textbf{47.16\%}} & \multicolumn{1}{r|}{27.78\%}          & \multicolumn{1}{r|}{19.23\%}          & \multicolumn{1}{r|}{1.75\%}        & \multicolumn{1}{r|}{0.64\%}          & \multicolumn{1}{r||}{3.59\%} & \multicolumn{1}{r|}{\textbf{46.82\%}} & \multicolumn{1}{r|}{8.35\%}          & \multicolumn{1}{r|}{14.87\%}           & \multicolumn{1}{r|}{1.04\%}        & \multicolumn{1}{r|}{0.60\%}          & \multicolumn{1}{r|}{2.09\%} \\ \hline
Epinions         & \multicolumn{1}{r|}{2.87\%}           & \multicolumn{1}{r|}{1.74\%} & \multicolumn{1}{r|}{\textbf{3.03\%}}           & \multicolumn{1}{r|}{0.00\%}        & \multicolumn{1}{r|}{0.17\%}          & \multicolumn{1}{r||}{0.79\%} & \multicolumn{1}{r|}{\textbf{2.57\%}}           & \multicolumn{1}{r|}{0.11\%} & \multicolumn{1}{r|}{1.55\%}           & \multicolumn{1}{r|}{0.00\%}        & \multicolumn{1}{r|}{0.00\%}          & \multicolumn{1}{r|}{0.21\%} \\ \hline
Amazon           & \multicolumn{1}{r|}{42.70\%}          & \multicolumn{1}{r|}{43.80\%}          & \multicolumn{1}{r|}{\textbf{45.50\%}} & \multicolumn{1}{r|}{4.76\%}        & \multicolumn{1}{r|}{0.00\%}          & DNF &\multicolumn{1}{r|}{\textbf{40.70\%}} & \multicolumn{1}{r|}{14.52\%}          & \multicolumn{1}{r|}{28.04\%}          & \multicolumn{1}{r|}{2.50\%}        & \multicolumn{1}{r|}{0.00\%}          & DNF \\ \hline
DBLP             & \multicolumn{1}{r|}{10.00\%} & \multicolumn{1}{r|}{27.30\%}          & \multicolumn{1}{r|}{\textbf{34.40\%}}          & \multicolumn{1}{r|}{1.74\%}        & \multicolumn{1}{r|}{0.21\%}          & DNF &\multicolumn{1}{r|}{9.98\%} & \multicolumn{1}{r|}{7.03\%}          & \multicolumn{1}{r|}{\textbf{25.20\%}}          & \multicolumn{1}{r|}{1.61\%}        & \multicolumn{1}{r|}{0.20\%}          & DNF \\ \hline
Youtube          & \multicolumn{1}{r|}{0.48\%}           & \multicolumn{1}{r|}{1.37\%} & \multicolumn{1}{r|}{\textbf{2.37\%}}           & \multicolumn{1}{r|}{0.02\%}        & \multicolumn{1}{r|}{0.00\%}          & DNF &\multicolumn{1}{r|}{0.44\%}           & \multicolumn{1}{r|}{0.04\%} & \multicolumn{1}{r|}{\textbf{0.82\%}}           & \multicolumn{1}{r|}{0.02\%}        & \multicolumn{1}{r|}{0.00\%}          & DNF \\ \hline
Skitter          & \multicolumn{1}{r|}{2.57\%}           & \multicolumn{1}{r|}{1.79\%}           & \multicolumn{1}{r|}{\textbf{3.11\%}}  & \multicolumn{1}{r|}{0.00\%}        & \multicolumn{1}{r|}{0.00\%}          & DNF &\multicolumn{1}{r|}{\textbf{1.56\%}}           & \multicolumn{1}{r|}{0.07\%}           & \multicolumn{1}{r|}{0.67\%}  & \multicolumn{1}{r|}{0.00\%}        & \multicolumn{1}{r|}{0.00\%}          & DNF \\ \hline
%CitPatents       & \multicolumn{1}{r|}{5.10\%}          & \multicolumn{1}{r|}{2.89\%}          & \multicolumn{1}{r|}{\textbf{11.46\%}} & \multicolumn{1}{r|}{0.12\%}        & \multicolumn{1}{r|}{0.00\%}          & DNF &\multicolumn{1}{r|}{\textbf{3.19\%}}           & \multicolumn{1}{r|}{0.05\%}           & \multicolumn{1}{r|}{2.77\%} & \multicolumn{1}{r|}{0.00\%}        & \multicolumn{1}{r|}{0.00\%}          & DNF \\ \hline
Wikipedia        & \multicolumn{1}{r|}{\textbf{5.38\%}}  & \multicolumn{1}{r|}{0.28\%}           & \multicolumn{1}{r|}{1.00\%}           & \multicolumn{1}{r|}{0.00\%}        & \multicolumn{1}{r|}{0.00\%}          & DNF &\multicolumn{1}{r|}{\textbf{4.21\%}}  & \multicolumn{1}{r|}{0.11\%}           & \multicolumn{1}{r|}{0.59\%}           & \multicolumn{1}{r|}{0.00\%}        & \multicolumn{1}{r|}{0.00\%}          & DNF \\ \hline
Livejournal      & \multicolumn{1}{r|}{\textbf{17.80\%}} & \multicolumn{1}{r|}{4.86\%}          & \multicolumn{1}{r|}{12.70\%}          & \multicolumn{1}{r|}{0.38\%}        & \multicolumn{1}{r|}{0.02\%}          & DNF &\multicolumn{1}{r|}{\textbf{16.04\%}} & \multicolumn{1}{r|}{0.87\%}           & \multicolumn{1}{r|}{8.21\%}          & \multicolumn{1}{r|}{0.35\%}        & \multicolumn{1}{r|}{0.02\%}          & DNF \\ \hline
Orkut            & \multicolumn{1}{r|}{\textbf{24.20\%}} & \multicolumn{1}{r|}{0.34\%}           & \multicolumn{1}{r|}{8.98\%}          & \multicolumn{1}{r|}{0.00\%}        & \multicolumn{1}{r|}{0.00\%}          & DNF&\multicolumn{1}{r|}{\textbf{16.10\%}}          & \multicolumn{1}{r|}{0.04\%}           & \multicolumn{1}{r|}{5.68\%} & \multicolumn{1}{r|}{0.00\%}        & \multicolumn{1}{r|}{0.00\%}          & DNF \\ \hline
\end{tabular}
    \caption{For each dataset, we look at the sets/clusters output by each method, and their coverage at different densities.
    Coverage at 0.5 (respectively 0.8) is the percentage of the vertex set of the graph that is contained in output sets of density more than 0.5 (respectively 0.8). We restrict our attention to sets with more than 5 vertices. Only Louvain and $k$-truss do well. In cases where they are ahead, the difference is typically small (except DBLP). At the higher threshold, \mainproc{} usually has a significant advantage. The numbers in bold are the highest coverage figures at each threshold. DNF denotes the method did not finish.}
    \vspace{-0.75cm}
 \label{tab:coverage}
 \end{table*}

We perform a detailed empirical evaluation of \mainproc, by running it on a large variety of publicly available datasets.
We also compare with a number of state of the art procedures. 
We simply use the term ``density" to denote edge density. Our aim is to get a large collection of dense sets.
\paragraph{Implementation details:} Our code for \mainproc{} is at \url{https://github.com/amazon-science/amazon-RTRExtractor/tree/main}.  All code is written in C++17, and run on a PC with an Intel i7-10750H. When running \mainproc{}, the value of $\eps$ is set to $0.1$ and the parameter of growing heuristic is set to $10$. For our experiments, the networks have been taken from the SNAP database~\cite{snapnets}, with the exception of the labelled DBLP citation network (V2) taken from \url{aminer.org}~\cite{aminer}. We summarize the datasets, important statistics of the networks, and some primary features of our results in \Tab{summary}. Our implementation of \mainproc{} is fast. Even in the largest network in our dataset, \mainproc{} runs in minutes on a laptop.
\paragraph{Comparisons with other algorithms:} There is no explicit algorithm that tries to maximize coverage and edge density. Nevertheless, 
as discussed in \Sec{related}, there are numerous procedures that find dense subgraphs. Each procedure outputs a family (or hierarchy) of sets. We do not experiment with methods that cannot scale to graphs with hundreds of 
millions of edges. We compare \mainproc{} with four state of the art procedures. 

 \begin{asparaitem}
     \item The $k$-core decomposition: a linear $(O(m))$ time decomposition algorithm ~\cite{MB83} that creates a hierarchical clustering of vertices by iteratively removing vertices of degree less than $k$.
     \item The $(2,3)$-nucleus decomposition~\cite{SaSePi+15}: This is an important algorithm that finds a large set of dense subgraphs and is arguably the closest
         to our objective. We choose the (2,3)-nucleus, which is exactly the $k$-truss decomposition~\cite{truss}. The decomposition is a hierarchy, so it does not give an explicit disjoint collection. So, to analyze density $\gamma$, we take the highest
         sets in the hierarchy with density more than $\gamma$. We note that sibling nuclei can share vertices, but are edge disjoint. We refer to this as $k$-truss.
     \item The Louvain algorithm~\cite{Louvain}: While community detection methods do not try to optimize for edge density, it is instructive
         to compare with arguably the most famous community detection method. Despite the plethora of such methods, the Louvain algorithm
         is still one of the most scalable procedures that usually does well.
     \item Iterated Charikar's greedy peeling algorithm~\cite{Charikar}: This greedy heuristic is a method to get a set with large average degree,
         and has been used as the basis of many dense subgraph discovery algorithms ~\cite{Tsou14, Ts15, Flowless, CQT22}. The output is only a single set, so we apply
         it repeatedly to get a collection of dense subgraphs. We denote this algorithm as Greedy.
     \item Iterated Flowless~\cite{Flowless}: The Flowless algorithm optimizes the average degree of a subgraph as well.
         We iterate repeatedly to get a collection of sets and refer to the procedure as Flowless.
 \end{asparaitem}

If an algorithm does not finish in twelve hours, we terminate it and note DNF in our results. We note that Flowless does not scale well, and times out for large
instances.

\paragraph{Measuring coverage:} For any family of disjoint sets, we will often measure the \emph{coverage at a density}. This means, for a given density $\gamma$, we look
at the total number of vertices in sets of size at least 5 and density at least $\gamma$. In \Tab{coverage} we take a detailed look at coverage for all our datasets. In \Fig{intro}, for all the methods, we remove sets of extremely small size ($\leq 5$ vertices). This thresholding
helps remove trivial outputs and disconnected components.
The other metric of importance is the density of the top (say) 20 sets: we look at this in \Fig{largest-scatter}. 
We now list out the major experimental findings.

\begin{table*}
    \begin{minipage}{.6\linewidth}
        {\begin{tabular}{|c|c|c|c|c|c|c|c|c|}
    \hline
         \multirow{2}{*}{Network} & \multicolumn{2}{c|}{\mainproc{}}& \multicolumn{2}{c|}{Louvain} & \multicolumn{2}{c|}{Greedy} &
         \multicolumn{2}{c|}{Flowless}\\ \cline{2-9} & Size & ED  & Size & ED & Size & ED & Size & ED\\ \hline
         EUEmail & 16 & 0.93 & 143 & 0.14 & 227 & 0.24 & 244 & 0.22\\
         Hamsterster & 17 & 1.00 & 143 & 0.10 & 435 & 2.5e-3 & 254 & 0.04\\
         CaAstroPh & 36 & 0.98 & 1.1k & 0.02 & 2.7k & 3.5e-3 & 1.2k & 0.01\\
         Epinions & 40 & 0.69 & 11k & 1.4e-3 & 24k & 5.8e-5 & 9.1k & 1.5e-4\\
         Amazon & 7 & 1.00 & 318 & 0.02 & 60k & 2.3e-5 & DNF&DNF \\
         DBLP & 59 & 1.00 & 7k & 1.1e-3 & 64k & 1.9e-5 & DNF&DNF \\
         Youtube & 22 & 0.79 & 95k & 9.5e-5 & 377k & 3.9e-6 &DNF &DNF\\
         Skitter & 102 & 0.23 & 114k & 1.3e-4 & 379k & 3.5e-6 & DNF& DNF\\
         %CitPatents& 182 & 0.23 & 82k & 1.4e-4 & 418k & 2.9e-5 & DNF&DNF \\         
         Wikipedia & 341 & 0.20 & 181k & 1.1e-4 & 253k & 5.6e-6 &DNF &DNF \\
         Livejournal & 342 & 0.40 & 508k & 2.7e-5 & 655k & 2.1e-6 &DNF &DNF \\
         Orkut & 383 & 0.32 & 543k & 7.2e-5 & 454k & 1.0e-4 &DNF &DNF \\
         \hline
        \end{tabular}
        \caption*{Largest output set and their densities}}

    \end{minipage}
         \hfill
%\hspace{0.25cm}
\begin{minipage}{.35\linewidth}
{    \begin{tabular}{|c|c|c|c|}
\hline
Network & RTREx & Louvain & $k$-Truss \\ \hline
EUEmail          & 0.08     & 0.03       & 1.53     \\
Hamsterster      & 0.05     & 0.02       & 2.01     \\
CaAstroPh        & 0.56     & 0.20       & 7.65     \\
Epinions         & 2.16     & 0.58       & 5.21     \\
Amazon           & 1.67     & 2.66       & 1.88     \\
DBLP             & 1.74     & 5.16       & 10.30    \\
Youtube          & 10.07    & 11.67      & 13.70    \\
Skitter          & 33.52    & 28.03      & 77.37    \\
Wikipedia        & 144.06   & 108.16     & 239.03   \\
Livejournal      & 147.12   & 322.50     & 438.48   \\
Orkut            & 1008.08  & 983.59     & 3063.54  \\ \hline
\end{tabular}
\caption*{Time (seconds)}
}
\end{minipage}

    \caption{Left: Largest output clusters and their respective densities. Louvain and Greedy tend to output abnormally large clusters with poor density. For almost half of the networks, the largest set output by \mainproc{} is above 0.9, and even in the worst case, this density is above 0.2. Output sets are never more than a few hundred vertices in size. Right: Time taken by each of the well-performing methods. Typically, \mainproc{} is the fastest, and $k$-truss is the slowest.}
    \label{tab:largest}
    \vspace{-0.75cm}
\end{table*}

\mainproc{} gives significantly higher coverage at high densities than any of the competing methods. In \Tab{coverage}, we provide
the coverage for densities $0.5$ and $0.8$, across all the methods. For most instances, \mainproc{} gets significantly more coverage at high density. In \Tab{coverage}, we look at this restricted to sets with at least 5 vertices.
For example, in {\tt EUEmail}, coverage for Louvain and Nucleus are in single digits, whereas \mainproc{} coverage is 33\% at 0.5 and 23\% at 0.8. 
Similarly, for {\tt Livejournal}, we have 18\% and 16\% respectively, and for our largest dataset, the {\tt Orkut} social network, \mainproc{} gets 24\% and 16\%, higher than the other methods. For a comprehensive analysis, we include some datasets where other methods outperform \mainproc: typically, these are graphs with lower clustering coefficients. We lag in DBLP noticeably, but observe that many vertices are contained in smaller sets; a common feature of collaboration networks where a small group of authors frequently write with each other. For DBLP, \mainproc{} covers over 62\% vertices at both thresholds when including sets with fewer than 5 vertices, where as coverage is in the fifties for the others at 0.5, and Louvain drops to 27\% at 0.8. Greedy consistently performs the worst. 
\paragraph{Many sets of high density:} \mainproc{} outputs dense, reasonably sized sets with around tens to a few hundred vertices. 
Other methods give larger and sparse clusters with even half a million vertices. In \Fig{largest-scatter}, we plot the size and density
of the largest 20 sets output by \mainproc. For example, for the largest {\tt LiveJournal} and {\tt Orkut} datasets, we see all sets
with hundreds of vertices, with densities $0.3$ or higher. In contrast, the largest sets output by other methods typically have low density (in \Tab{largest}).  In general, both Greedy and Flowless output large sets of low edge density.
We do not give numbers of the nucleus decomposition, since it does not output a disjoint collection of sets. (The largest sets would simply
cover the entire graph, and most of the leaves have size three or four.) 
For Louvain, some outputs have half a million vertices and very poor density. In \Tab{summary}, we give the number of sets output
by \mainproc: they almost always have high density (\Tab{mean-density}).

\paragraph{Variation in coverage thresholds:} \mainproc{} coverage is high across different thresholds. Competing methods exhibit sharp drops in coverage at higher thresholds. A good example of this is Livejournal, where \mainproc{} has coverage of 25\% at 0.5 and 23\% at 0.8 (an 8\% drop in coverage); for Louvain, these are 13\% and 6\% (an 55\% drop in coverage) and for Nucleus, these are 21\% and 17\% (a 20\% drop in  coverage).  This trend holds across all datasets.

\begin{table}[!h]
    \centering
    \begin{tabular}{|p{3.5cm}|p{0.8cm}|p{2.5cm}|}
    \hline
    Title  & Year & Venue \\ \hline \hline
    \textbf{Top-k Spatial Joins} & 2005 & IEEE Transactions on Knowledge and Data Engineering \\ \hline
         Spatial hash-joins & 1996 & ACM SIGMOD Record \\ \hline
Partition based spatial-merge join & 1996 & ACM SIGMOD Record \\ \hline
Multiway spatial joins & 2001 & ACM Transactions on Database Systems\\ \hline
Efficient processing of spatial joins using R-trees & 1993 & International Conference on Management of Data \\ \hline
Spatial Joins Using R-trees: Breadth-First Traversal with Global Optimizations & 1997 & Very Large Data Bases \\ \hline
Spatial Join Indices & 1991 & International Conference on Data Engineering \\ \hline
Scalable Sweeping-Based Spatial Join & 1998 & Very Large Data Bases \\ \hline
Size separation spatial join & 1997 & International Conference on Management of Data \\ \hline
Slot Index Spatial Join & 2003 & IEEE Transactions on Knowledge and Data Engineering \\ \hline

Spatial join techniques & 2007 & ACM Transactions on Database Systems\\ \hline

    \end{tabular}
    %cluster num 25
    \caption{An example of a cluster of 11 vertices from the DBLP citation network, with an edge density of 0.96. Despite having no knowledge of the labels, our algorithm is able to extract a dense network of papers on spatial joins from the network. The first row (in bold) is the first vertex.}
    \label{tab:dblp-d}
    \vspace{-0.2cm}
\end{table}
\paragraph{The largest set output:} We do further quantitative evaluation of the output in \Tab{largest}. Here, we look at the largest set output by each method, and the corresponding densities. Grouping vertices in absurdly large sets with hundreds and thousands of vertices fails to capture the essence of dense subgraph discovery. Even when we look at the largest sets output by \mainproc{}, it is at most a few hundred vertices. Moreover, densities are remarkably high: over 0.9 in five of the eleven networks we look at, and above 0.2 in all cases. However, Greedy, Flowless, and even Louvian assign vertices to much larger and sparser assets. This is especially notable in the larger datasets, where both Louvain and Greedy output sets as large as half a million vertices; Louvain being a bit better than Greedy in most cases. Owing to the nature of the hierarchical clusters of Nucleus, where most leaves are clusters of size 3 or 5, we do not do this comparison with that method.

Note that the greedy method optimizes for $D(H)=2E(H)/V(H)$ over all subgraphs $H\subseteq G$; thus hoping to maximize average degree. This implicitly penalizes edge density. In real networks, edge density is high only in small pockets of vertices that form dense subgraphs. Imagine a clique $K$ of size $50$; such a clique has 1225 edges, and the objective value $D(K)$ for the greedy algorithm for this clique is 49. For our goals, this is an excellent set with edge density 1, and we would like to recover. However, if $K$ is contained in a larger subgraph $L$ with 1k vertices and 50k induced edges, then $D(L)=50>D(K)$, but edge density is just 0.1, and will be favored by the greedy algorithm. Larger and larger subgraphs with poor edge density can have high average degree. Thus the classic greedy objective is a bad one if we want to maximize edge density.

\begin{table}
    \centering
    \begin{tabular}{|p{3.5cm}|p{0.8cm}|p{3.5cm}|}
    \hline
    Title  & Year & Venue \\ \hline \hline
    \textbf{Redesigning with traits: the Nile stream trait-based library} & 2007 & International Conference on Dynamic languages /International Smalltalk Joint Conference \\ \hline
Traits at work: The design of a new trait-based stream library & 2009 & Computer Languages, Systems and Structures \\ \hline
Automatic inheritance hierarchy restructuring and method refactoring & 1996 & ACM SIGPLAN conference on Object-oriented programming et al. \\ \hline
On automatic class insertion with overloading & 1996 & ACM SIGPLAN Notices \\ \hline
Back to the future: the story of Squeak, a practical Smalltalk written in itself & 1997 & ACM SIGPLAN conference on Object oriented programming et al. \\ \hline
Design of class hierarchies based on concept (Galois) lattices & 1998 & Theory and Practice of Object Systems \\ \hline
Refactoring class hierarchies with KABA & 2004 & ACM SIGPLAN conference on Object oriented programming et al. \\ \hline
Interfaces and specifications for the Smalltalk-80 collection classes & 1992 & Conference proceedings on Object oriented programming systems et al.\\ \hline
Identifying traits with formal concept analysis & 2005 & Automated Software Engineering \\ \hline
Traits: A mechanism for fine-grained reuse & 2006 & ACM Transactions on Programming Languages and Systems (TOPLAS) \\ \hline
    \end{tabular}
    %cluster num 25
    \caption{An example of a cluster of 10 vertices from the DBLP citation network, with an edge density of 0.58. This cluster broadly focuses on object oriented programming, with a focus on traits. The first row (in bold) is the first vertex.}
    \label{tab:dblp-s}
     \vspace{-0.2cm}
\end{table}
%\newpage

\subsection{A case study}\label{sec:dblplab}
\begin{figure*}
     \centering
     \begin{subfigure}
         {\includegraphics[width= 0.24\textwidth]{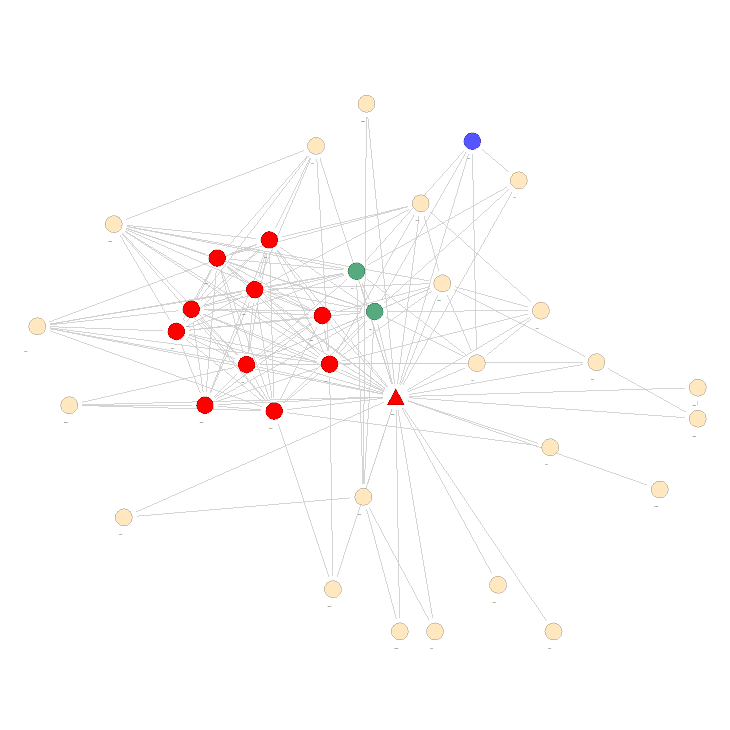}}
     \end{subfigure}
     \begin{subfigure}
         {\includegraphics[width= 0.24\textwidth]{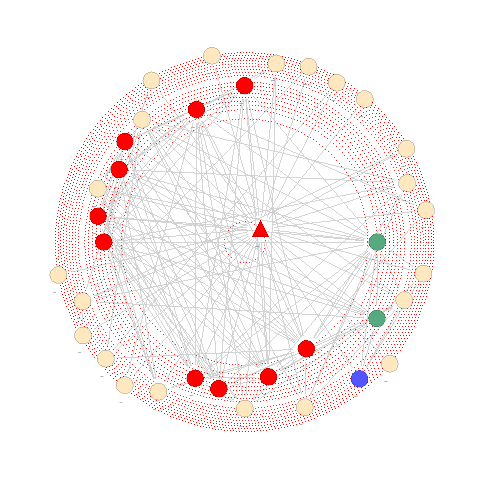}}
     \end{subfigure}
     \vline
     \begin{subfigure}
         {\includegraphics[width= 0.24\textwidth]{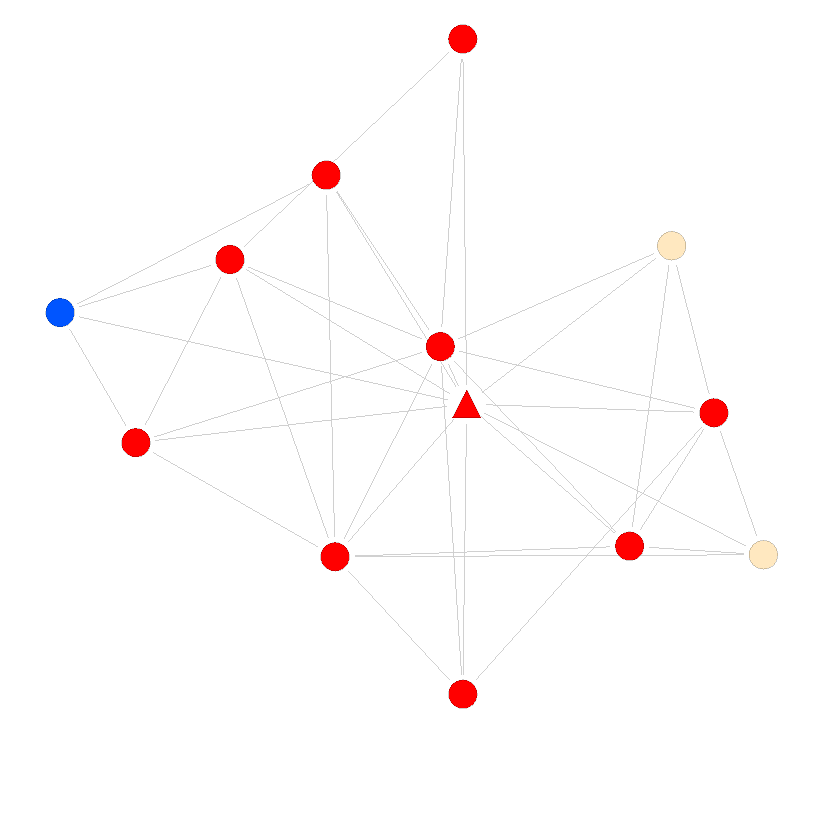}}
     \end{subfigure}
     \begin{subfigure}
         {\includegraphics[width= 0.24\textwidth]{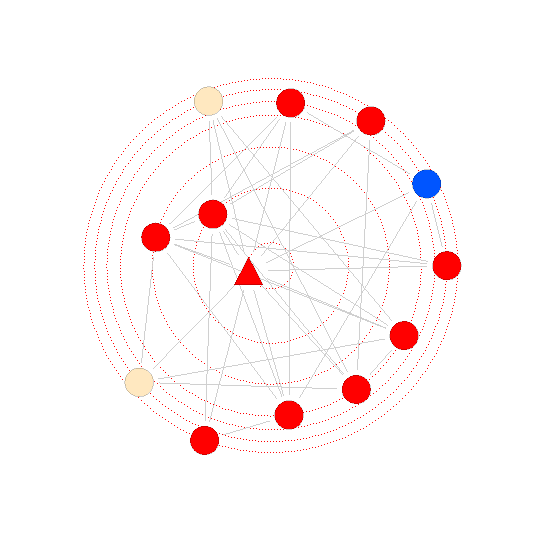}}
     \end{subfigure}
     \caption{On the left we have the subgraphs from \Tab{dblp-d}, and on the right from \Tab{dblp-s}. In both cases, we first draw a Fruchterman Reingold forced drawing of the subgraph, and then a radial illustration of vertices by closeness centrality. The lowest degree vertex is denoted by a triangle. In the first case, the lowest degree vertex induces a one hop neighborhood that is much larger than the cluster, and has many low degree vertices; the vast majority (20 of 35) of these are cleaned away as singletons, marked in pale yellow. The red vertices form our cluster. The blue and green vertices belong to some other non trivial clusters. In terms of closeness centrality, vertices in the cluster are much closer to the `central' (lowest degree) vertex. To the contrary, in the second cluster, the cluster is almost all of the one hop neighborhood, which has very few stray vertices. Only two of the vertices are peeled away as singletons.}\label{fig:subgraphs}
 \end{figure*}
 As an additional experiment, we look at a citation network of computer science papers taken from a bunch of sources including DBLP, ACM, and the Microsoft Academic Graph; hosted on AMiner~\cite{aminer}. We use version 2 from the source: this network has 660k vertices and 3M edges. Our algorithm extracted 15k non-trivial clusters. In \Tab{dblp-d}, we look at an example of such a cluster: it has 11 vertices, and an edge density of 0.96. Without any information about the semantic features of the network, our algorithm is able to extract a cluster of papers on spatial joins published in similar venues such as SIGMOD, TODS and VLDB in the rough span of a decade. As another example, let us pick a sparser cluster. For this, we pick a cluster of 10 vertices with an edge density of 0.58 in \Tab{dblp-s}. Here, we see a group of papers on object oriented programming focusing on traits, all from venues like TOPLAS and OOPSLA. 
 We get \emph{thousands} of such clusters, underscoring the potential for \mainproc{} for unsupervised knowledge discovery.
 
 As an illustrative example, let us examine how these sets are extracted from our network. Recall that \Alg{mainproc} always picks a vertex of minimal degree after passing \Step{clean}. In both cases, the extracted sets lie entirely in the one-hop neighborhood of this lowest-degree vertex. We direct the reader to \Fig{subgraphs} (on page 12), where we provide visual descriptions of these sets. We include two images for each set: a Fruchterman-Reingold drawing~\cite{FR91}, a method of forced graph drawing, and an arrangement of the same set radially by closeness centrality~\cite{GOLBECK201325}. Each image is of the immediate neighborhood of the first vertex in the network. The red vertices are the ones extracted by \mainproc{}, with the triangle being the lowest degree first vertex, and we look at the subgraph induced by this vertex and its immediate neighbors. 
 
 Our sets are formed of a tightly knit set of vertices around the lowest degree vertex: the rest are either cleaned away as singletons (marked in beige) or participate in other clusters (vertices marked in other colors). In the second case, while our subgraph is sparser, most of the neighboring vertices have a similar induced degree inside the neighborhood, and almost all of them are retained by our algorithm. In contrast, the subset in our first example is denser, but the one-hop neighborhood of the first vertex has vertices with significant variation in the induced degree. The vertices that form few (or no) triangles in the induced subgraph are all cleared away. In the closeness centrality figures, we note that the selected vertices are usually much closer to the first vertex. This is especially noticeable in the first set, where a lot of the other neighbors are peeled away as singletons.

 \paragraph{What is excluded from the sets?} The above examples show that everything \mainproc{} includes is relevant to the set as a whole. It is natural to ask how good \mainproc{} is at \emph{not} excluding relevant vertices. To examine this we look at the set of papers in the neighborhood of the first vertices that are not included in our output sets. We can confirm that this is indeed the case! This is especially true for the set in \Tab{dblp-d}: of the twenty-five neighbors of the first vertex excluded from the set, only four are on spatial joins. For the set in \Tab{dblp-s}, the papers are somewhat more similar (as exhibited in \Fig{subgraphs}, and they all concern object hierarchies. However, the venues are significantly different, and only one of them is about traits.

 \section{Conclusion and Future Directions}
 We present the algorithm \mainproc{} and show that it succeeds at efficiently extracting sets of densely connected vertices by leveraging triangle rich-sets. Our investigation opens several doors, which we propose as future research questions. 
 \begin{asparaitem}
    \item \emph{Cleaning: } The broader principle behind the cleaning procedure in \Step{clean} is the removal of `bad' edges with low participation in dense sets. For the extraction of dense sets, triangles are a good measure, and our proposed procedure does well. However, we believe that the idea of cleaning has rich applications as a sparsification procedure for a multitude of downstream tasks. To this end, a generalized cleaning procedure; or showing that this procedure works well for other tasks, would be a great direction for future research.
     \item \emph{Hierarchies: } The two well-performing algorithms we compare to, the Louvain algorithm and the Nucleus algorithm, are both hierarchical clustering algorithms. This leads to the question: is there a natural setup to organize the output sets of \mainproc{} to give a hierarchy of clusters?
     \end{asparaitem}
\begin{acks}
 Sabyasachi Basu, Daniel Paul-Pena and C. Seshadhri acknowledge the support of NSF DMS-2023495 and CCF-1839317.
\end{acks}

\bibliographystyle{ACM-Reference-Format}
\bibliography{RTR}

\end{document}